\theoremstyle{plain}
\newtheorem{theorem}{Theorem}
\newtheorem{lemma}{Lemma}
\newtheorem{proposition}{Proposition}
\theoremstyle{definition}
\newtheorem{definition}{Definition}
\newcommand{\A}{\mathcal{A}}
\newcommand{\T}{\mathfrak{T}}
\newcommand{\RR}{\mathfrak{R}}
\newcommand{\LL}{\mathfrak{L}}
\newcommand{\Reg}{\mathrm{Reg}}
\newcommand{\SReg}{\mathrm{SReg}}
\newcommand{\Sur}{\mathrm{Sur}}
\newcommand{\n}{\mathfrak{n}}
\newcommand{\m}{\mathfrak{m}}
\newcommand{\rt}{\mathfrak{r}}
\newcommand{\lf}{\mathfrak{l}}
\newcommand{\p}{\mathrm{p}}
\newcommand{\Ind}{\mathbb{I}}
\newcommand{\fa}{\:\forall}
\newcommand{\g}{\gamma}
\newcommand{\ve}{\varepsilon}
\newcommand{\e}{\epsilon}
\newcommand{\gb}{{\boldsymbol\g}}
\newcommand{\argmax}{\mathop{\mathrm{argmax}}}
\newcommand{\ASet}{\mathbf{A}}
\newcommand{\pre}{\mathtt{pre}} 
\newcommand{\treeroot}{\mathfrak{e}}
\newcommand{\Bset}{\mathbb{M}}
\newcommand{\bBset}{\overline{\Bset}}
\newcommand{\Sset}{\mathbb{S}}
\newcommand{\vb}{\mathbf{v}}
\newcommand{\bb}{\mathbf{b}}
\newcommand{\pb}{\mathbf{p}}
\newcommand{\bm}{\overline{m}}
\newcommand{\bv}{\overline{v}}
\newcommand{\bbv}{\overline{\bv}}
\newcommand{\bp}{\overline{p}}
\newcommand{\bpb}{\overline{\mathbf{p}}}
\newcommand{\ba}{\overline{a}}
\newcommand{\bab}{\overline{\mathbf{a}}}
\newcommand{\ab}{\mathbf{a}}
\newcommand{\lb}{\mathbf{l}}
\newcommand{\qb}{\mathbf{q}}
\newcommand{\sig}{\sigma}
\newcommand{\betab}{{\boldsymbol\beta}}
\newcommand{\HH}{\mathbb{H}}
\newcommand{\StrartSet}{\mathfrak{S}}
\newcommand{\TT}{\mathcal{T}}
\newcommand{\II}{\mathcal{I}}
\newcommand{\diva}{\mathrm{div}}
\newcommand{\osig}{\mathring{\sigma}}
\newcommand{\ob}{\mathring{b}}
\newcommand{\obb}{\mathring{\mathbf{b}}}
\newcommand{\ind}{\mathrm{ind}}
\newcommand{\dev}{\mathrm{dev}}
\newcommand{\RPPA}{\mathrm{RPPA}}
\newcommand{\sr}{\mathrm{sr}}
\newcommand{\eset}{\varnothing}
\newcommand{\Expect}{\mathbb{E}}
\title{Reserve Pricing in Repeated Second-Price Auctions with Strategic Bidders}
\author{%
  Alexey Drutsa\\
  Yandex; MSU\\
  Moscow, Russia\\
  \texttt{adrutsa@yandex.ru} \\
}
\begin{document}

\maketitle

\begin{abstract}
We study revenue optimization learning algorithms for repeated second-price auctions with reserve where a seller interacts with multiple strategic bidders each of which holds a fixed private valuation for a good and seeks to maximize his expected future cumulative discounted surplus.	
We propose a novel algorithm that  has strategic regret upper bound of $O(\log\log T)$ for worst-case valuations.
This pricing is based on our novel transformation that upgrades an algorithm designed for the setup with a single buyer to the multi-buyer case. 
We provide theoretical guarantees on the ability of a transformed algorithm to  learn the valuation of a strategic buyer, which has uncertainty about 
the future due to the presence of rivals.
\end{abstract}

\section{Introduction}
\label{sec_Intro}

Revenue maximization is one of fundamental development directions  in  major Internet companies that have their  own online advertising platforms~\cite{2014-WWW-Gomes,2015-ManagSci-Balseiro,2014-KDD-Agarwal,2017-WWW-Drutsa,2018-IJGT-Hummel}.
Most part of ad inventory is sold via widely applicable second price auctions~\cite{2013-IJCAI-He,2014-ICML-Mohri} and their generalizations like GSP~\cite{2007-IJIO-Varian,2009-AER-Varian,2014-AER-Varian,2014-ECRA-Sun}.
Adjustment of reserve prices plays a central role in revenue optimization here: their proper setting is studied both by game-theoretical methods~\cite{1981-MOR-Myerson,agrawal2018robust} and by machine learning approaches~\cite{2007-Book-Nisan,2013-SODA-Cesa-Bianchi, 2014-ICML-Mohri,2016-WWW-Paes}.

In our work, we focus on a scenario where the seller \emph{repeatedly} interacts through a second-price auction with $M$ strategic bidders (referred to as buyers as well).
Each buyer participates in each round of this game, holds a \emph{fixed} private valuation for a good (e.g., an ad
space), and seeks to maximize his expected future discounted surplus given his beliefs about the behaviors of other bidders.
The seller applies a deterministic online learning algorithm, which is announced to the buyers in advance and, in each round, selects  individual reserve prices based on the previous bids of the buyers.
The seller's goal is to maximize her  revenue over a finite horizon~$T$  through \emph{regret} minimization for \emph{worst-case} valuations of the bidders~\cite{2014-NIPS-Mohri,2018-ICML-Drutsa}.
Thus, the seller  seeks for a \emph{no-regret} pricing algorithm.

To the best of our knowledge, no existing study investigated  \emph{worst-case regret optimizing} algorithms that set reserve prices in repeated second-price auctions with strategic bidders whose \emph{valuation is private, but fixed} over all rounds.
However, our setting constitutes a natural generalization of the well-studied $1$-buyer setup of repeated posted-price auctions\footnote{In particular, when $M=1$, our auction in a round reduces to a posted-price one: the bidder has no rivals and his decision is thus  binary (to accept or to reject a currently offered price).} (RPPA)~\cite{2013-NIPS-Amin,2014-NIPS-Mohri} to the scenario of multiple buyers in a second-price auction. In the RPPA setting, there are optimal algorithms~\cite{2017-WWW-Drutsa,2017-ArXiV-Drutsa,2018-ICML-Drutsa} that have tight strategic regret bound of $\Theta(\log\log T)$.
This bound follows from an ability of the seller to upper bound the buyer valuation even if he lies when rejecting a price~\cite[Prop.2]{2017-WWW-Drutsa}. 
This ability strongly exploits that the buyer knows in advance the outcomes of a current and all future rounds since he  has complete information due to the absence of rivals.
In our multi-bidder scenario, this does not hold: a bidder has incomplete information and is thus uncertain about the future. Hence, the theoretical guarantees could not be directly ported to our scenario when trying straightforwardly apply the optimal $1$-buyer RPPA algorithms.

In our study, we propose  a novel algorithm that can be applied against our strategic buyers  with regret upper bound of $O(\log\log T)$  (Th.~\ref{th_divPRRFES_sreg_uppbound}) and constitutes \emph{the main contribution of our work}.
We also introduce a novel transformation of a RPPA algorithm that  maps it to a multi-buyer pricing and is based on a simple but crucial idea of cyclic elimination of all bidders except one in each round (Sec.\ref{sec_divClass}). 
Construction and analysis of the  proposed algorithm and transformation have required introduction of novel techniques, which are contributed by our work as well. 
They include (a)~the method to locate the valuation of a strategic buyer in a played round under his uncertainty about the future (Prop.~\ref{prop_reject_bound_divalg}); 
(b)~the decomposition of strategic regret into the regret of learning the individual valuations and the deviation regret of learning which bidder has the maximal valuation (Lemma~\ref{lemma_divalg_reg_decompose}); and
(c)~the approach to learn the highest-valuation bidder with deviation regret of $O(1)$ w.r.t.\ $T$ (Lemma~\ref{lemma_divPRRFES_devreg_uppbound}).

\section{Preliminaries: setup, background, and overview of results}
\label{sec_Fwk}
{\bf Setup of Repeated Second-Price Auctions.}
We study the following mechanism of \emph{repeated second-price auctions}.
Namely, \emph{the auctioneer} repeatedly proposes goods (e.g., advertisement opportunities) to $M$ \emph{bidders} (whose set is denoted by $\Bset := \{1,\ldots,M\}, M\in\mathbb{N}$) over $T$ rounds: one good per round. From here on the following terminology is used as well: \emph{the seller} for the auctioneer, a \emph{buyer} for a bidder, and \emph{the time horizon} for the number of rounds $T$.
Each bidder $m\in\Bset$ holds a \emph{fixed private valuation} $v^m \in [0, 1]$ for a good, i.e., the valuation $v^m$ is equal for goods offered in all rounds and is unknown  
 to the seller.
The vector of valuations of all bidders is denoted by $\vb := \{v^m\}_{m=1}^{M}$.


In each round $t\!\in\!\{1,\ldots,T\}$, for each bidder $m\!\in\!\Bset$, the seller sets a personal reserve price $p^m_t$, and the buyer $m$ (knowing $p^m_t$) submits a sealed bid of $b^m_t$. 
Given the reserve prices $\pb_t\!\!:=\!\{p^m_t\}_{m=1}^{M}$ and  the bids $\bb_t\!\!:=\! \{b^m_t\}_{m=1}^{M}$, the standard allocation and payment rules of a second price auction are applied~\cite{2016-WWW-Paes}:
(a)~for each bidder $m\!\in\!\Bset$, we check whether he bids over his reserve price or not, 
$a^m_t\!\!:=\!\!\Ind_{\{b^m_t\ge p^m_t\}}$,
obtaining the set $\Bset_t\!\!:=\!\!\{m\!\in\!\Bset \mid a^m_t \!=\! 1\}$ of actual bidder-participants;
(b) if $\Bset_t \neq \eset$, 
the good is allocated to the winning bidder  $\bm_t\!:=\!\argmax_{m\in\Bset_t}b^m_t$ (if a tie, choose randomly)
who pays $\bp_t\!:=\!\max\{p^{\bm_t}_t, \max_{m\in\Bset_t\setminus\{\bm_t\}} b^m_t \}$ to the seller.
(c)~if $\Bset_t \!=\! \eset$, the current good disappears and no payment is transferred.
Further we  use the following notations for allocation indicators, payments, and their vectors: $\ba_t \!:=\! \Ind_{\{\Bset_t \neq \eset\}}$, $\ba^m_t\! :=\!  \Ind_{\{\Bset_t \neq \eset \& m = \bm_t\}}$, $\bp^m_t\! :=\! \ba^m_t\bp_t$, $\ab_t\!:=\! \{a^m_t\}_{m=1}^{M}$,  $\bab_t\!:=\! \{\ba^m_t\}_{m=1}^{M}$, and $\bpb_t\!:=\! \{\bp^m_t\}_{m=1}^{M}$.
The summary on all notations is in App.~C.



Thus, the seller applies a \emph{(pricing) algorithm} $\A$ that sets reserve prices $\pb_{1:T}:=\{\pb_t\}_{t=1}^{T}$ in response to the buyers' bids $\bb_{1:T}:=\{\bb_t\}_{t=1}^{T}$.
We consider the deterministic online learning case when the reserve price $p^m_t$ for a bidder $m\in\Bset$ in a round $t\in\{1,\ldots,T\}$ can depend only on bids $\bb_{1:t-1}$ of all bidders during the previous rounds and, possibly, the horizon~$T$.
Let $\ASet_{M}$ be the set of such algorithms.
Hence, given a pricing algorithm $\A\in\ASet_{M}$, the buyers' bids $\bb_{1:T}$ uniquely define  the corresponding price sequence $\{\pb_t\}_{t=1}^{T}$, which, in turn, determines  the seller's total revenue $\sum_{t=1}^T \ba_t\bp_t$.
This revenue is usually compared to the revenue that would have been earned by offering the highest valuation $\bv:=\max_{m\in\Bset}v^m$ if the valuations $\vb=\{v^m\}_{m=1}^{M}$ were known in advance to the seller~\cite{2013-NIPS-Amin,2017-WWW-Drutsa}. This leads to the notion of the \emph{regret} of the algorithm $\A$:
$
\Reg(T,\A,\vb,\bb_{1:T}):= \sum_{t=1}^T(\bv - \ba_t\bp_t).
$

Following a standard assumption in mechanism design that matches the practice in ad exchanges~\cite{2014-NIPS-Mohri,2018-ICML-Drutsa}, the seller's pricing algorithm $\A$ \emph{is announced to the buyers in advance}. A bidder can then act strategically against this algorithm. 
In contrast to the case of one bidder ($M=1$), where the buyer can get an optimal behavior in advance, and the repeated mechanism reduces thus to a two-stage game~\cite{2013-NIPS-Amin,2014-NIPS-Mohri,2017-WWW-Drutsa}; in our setting, a bidder has incomplete information since he may not know the valuations and behaviors of the other bidders.
Therefore, in order to model buyer strategic behavior under this uncertainty,  we assume that, in each round $t$, each buyer optimizes  his utility on subgame of future rounds given the available history of previous rounds and his beliefs about the other buyers.

Formally, in a round $t$, given the seller's pricing algorithm $\A$, a strategic buyer $m\in\Bset$ observes a history $h^m_{t}\!\!:=\!\!(b^m_{1:t-1},p^m_{1:t},\ba^m_{1:t-1},\bp^m_{1:t-1})$ available to him  and derives his \emph{optimal bid} $\ob^m_t$ from a  (possibly mixed) strategy  $\sig\in\StrartSet_T$\footnote{\emph{A buyer strategy} is a map $\sigma : \HH_{1:T}  \to \mathbb{R}_+$ that maps any history $h\in\HH_t$  in a round~$t$ to a bid $\sigma(h)\in\mathbb{R}_+$,  where $\HH_{1:T}:=  \sqcup_{t=1}^{T} \HH_t$ and $\HH_t:=\mathbb{R}_+^{t-1}\times\mathbb{R}_+^{t}\times\mathbb{Z}_2^{t-1}\times\mathbb{R}_+^{t-1}$.	Let $\StrartSet_T$ denote the set of all possible strategies.} 
that maximizes his future $\g_m$-discounted surplus:
\vspace{-0.5mm}
\begin{equation}
\label{eq_surplus}
\Sur_{t:T}(\A,\g_m,v^m,h^m_t,\beta^m,\sig)=
\Expect\Big[\sum_{s=t}^T\g^{s-1}_m\ba^m_s(v^m - \bp^m_s)\mid h^m_t,\sig, \beta^m\Big],
\end{equation}
where $\g_m\in(0,1]$ is \emph{the discount rate}\footnote{Note that only buyer utilities are discounted over time, what is motivated by real-world markets as online advertising where sellers are far more willing to wait for revenue than buyers are willing to wait for goods~\cite{2014-NIPS-Mohri,2018-ICML-Drutsa}.} of the bidder $m$. The expectation in Eq.~(\ref{eq_surplus}) is taken over all possible continuations of the history $h^m_{t}$ w.r.t.\ a strategy $\sig\in\StrartSet_T$ of the buyer $m$ and  his beliefs $\beta^m$ about the strategies of the other bidders $\Bset^{-m}\!:=\!\Bset\!\setminus\!\{m\}$\footnote{So,  $\sig$ and  $\beta^m$ determine the future outcomes $\ba^m_s$ and $\bp^m_s$, that are thus random variables.}. The buyer $m$ assumes that the other bidders are strategic in the sense described above as well, what is taken into account in the beliefs $\beta^m$\footnote{In our setup, we do not require that the strategies actually used by the buyers $\Bset^{-m}$ match with the buyer $m$'s beliefs $\beta^{m}$ (an equilibrium requirement), because our results hold without this requirement.}.
When $T$ rounds has been played, 
let $\obb_t \!:=\!\{ \ob^m_t\}_{m=1}^M$ be the optimal bids that depend~on $(T,\A,\vb,\gb,\betab)$, where $\gb \!= \!\{\g_m\}_{m=1}^M$ and $\betab \!=\! \{\beta_m\}_{m=1}^M$.
We define \emph{the strategic regret} of the algorithm $\A$ that faced $M$ strategic buyers with valuations $\vb\!\in\![0,1]^M$ and beliefs $\betab$ over $T$ rounds as 
\begin{equation*}
\SReg(T,\A,\vb,\gb,\betab)\!:=\!\Reg\big(T,\A,\vb,\obb_{1:T}(T,\A,\vb,\gb,\betab)\big).
\end{equation*}
In  our setting, following~\cite{2013-NIPS-Amin,2014-NIPS-Mohri,2017-WWW-Drutsa,2018-ICML-Drutsa}, we seek for  algorithms  that  attain $o(T)$ strategic regret for  the \emph{worst-case} valuations $\vb\in[0,1]^M$. 
Formally, an algorithm $\A$ is said to be a \emph{no-regret} one when $\sup_{\vb\in[0,1]^M, \betab}\SReg(T,\A,\vb,\gb,\betab)=o(T)$ in our multi-buyer case. 
The optimization goal is to find algorithms with the lowest possible strategic regret upper bound $O(f(T))$, i.e., $f(T)$ has the slowest growth as $T\rightarrow\infty$ or, alternatively, the  averaged  regret  has  the best rate of convergence to zero.

{\bf Background on pricing algorithms.}
To the best of our knowledge, there is no work studied \emph{worst-case regret optimizing} algorithms that set reserve prices in repeated second-price auctions with strategic bidders whose \emph{valuation is private, but fixed} over all rounds.
However, in the case of one bidder, $M=1$, the bidder has no rivals, and, thus, the second-price auction in a round $t$ reduces to a posted-price auction, where the buyer decision reduces to a binary action: to accept or to reject a currently offered price $p^1_t$.
Let $\ASet^{\RPPA}\subset\ASet_{1}$ be the subclass of the 1-bidder algorithms s.t.\ each reserve price $p^1_t$ depends only on the past binary  decisions $a^1_{1:t-1}$ of the buyer to get or do not get a good for a posted reserve price. 
For this subclass, all our strategic setting of repeated second-price auctions reduces to the setup of repeated posted-price auctions (RPPA) earlier introduced in~\cite{2013-NIPS-Amin}. 

Pricing algorithms in the strategic setup of RPPA with fixed private valuation and worst-case regret optimization were well studied last years~\cite{2013-NIPS-Amin,2014-NIPS-Mohri,2017-WWW-Drutsa,2018-ICML-Drutsa}.
It is known that, if the discount rate $\g \!= \!1$,  any algorithm has a linear strategic regret, i.e., the regret has lower bound $\Omega(T)$~\cite{2013-NIPS-Amin}, while, for the other cases $\g \!\in\!(0,1)$, the lower bound of $\Omega(\log\log T)$ holds~\cite{2003-FOCS-Kleinberg,2014-NIPS-Mohri}.
The first algorithm with  optimal strategic regret bound of $\Theta(\log\log T)$ was found in~\cite{2017-WWW-Drutsa}. It is~Penalized Reject-Revising Fast Exploiting Search (PRRFES), which is horizon-independent  and is based on Fast Search~\cite{2003-FOCS-Kleinberg} modified to act against a strategic buyer. The modifications include penalizations (see Def.~\ref{def_PenalNodeSeq}).
A strategic buyer either accepts the price at the first node or rejects this price in subsequent penalization ones~\cite{2014-NIPS-Mohri,2017-WWW-Drutsa}. 
PRRFES is also a right-consistent algorithm:
a RPPA algorithm $\A_1$ is  \emph{right-consistent} ($\A_1\in\mathbf{C_R}$) if it never offers a price lower than the last accepted one~\cite{2017-WWW-Drutsa}.
The algorithm PRRFES was further modified by the transformation $\pre$ to obtain the one that never decreases offered prices and has  a tight strategic regret bound of $\Theta(\log\log T)$  as well~\cite{2018-ICML-Drutsa}.

The workflow of a RPPA algorithm $\A_1$ is usually described by a labeled binary tree $\T(\A_1)$~\cite{2014-NIPS-Mohri,2017-WWW-Drutsa,2018-ICML-Drutsa}: initialize the tracking node $\n$ to the root $\treeroot(\T(\A_1))$;
in each round, the label $\p(\n)$  is offered as a price; if it is accepted (rejected), move the tracking node to the right child $\n\!:=\!\rt(\n)$ (the left child $\n\!:=\!\lf(\n)$, resp.); and go to the next round.
The left (right) subtrees rooted at the node $\lf(\n)$ ($\rt(\n)$, resp.) are denoted by $\LL(\n)$ ($\RR(\n)$, resp.). 
When  trees $\T_1$ and $\T_2$ have the same node labeling, we write $\T_1 \!\cong\! \T_2$.


\begin{definition}
	\label{def_PenalNodeSeq}
	For a RPPA algorithm $\A_1\!\in\!\ASet^{\RPPA}$, nodes $\n_1,...,\n_r\in\T(\A_1)$ are said to be a \emph{($r$-length) penalization sequence} if
	$ \n_{i+1} \!=\! \lf(\n_i)$, $\p(\n_{i+1}) \!=\! \p(\n_i)$, and  $\RR(\n_{i+1}) \!\cong\! \RR(\n_{i}),   i\!=\!1,..,r\!-\!1$.
\end{definition}
%
%

{\bf Overview of our results.}
We cannot directly apply the optimal RPPA algorithms~\cite{2017-WWW-Drutsa,2018-ICML-Drutsa}, because our bidders have incomplete information in the game, while the proofs of optimality of these algorithms strongly rely on complete information.  
This completely different information structure of the multi-buyer game results in very complicated bidder behavior even in the absence of reserve prices~\cite{bikhchandani1988reputation}.
Hence, it is challenging to find,  in the multi-buyer case, a pricing algorithm that has regret upper bound of the same asymptotic behavior as the best one in the $1$-buyer RPPA setting.
Our research goal comprises closing of this research question on the existence of such algorithms. 

First, we propose a novel technique to transform a RPPA algorithm to our setup that is based on cyclic elimination of all bidders except one by means of high enough prices (Sec.~\ref{sec_divClass}).
Separate playing with each buyer removes his uncertainty about the outcome of a current round; and, despite remaining uncertainty about future rounds, this is enough to construct a tool to locate his valuation (Prop.~\ref{prop_reject_bound_divalg}).
Second,  we transform PRRFES in this way and show that its regret is affected by two learning processes: the one learns bidder valuations and  the other learns which bidders have the maximal valuation (Sec.~\ref{sec_divPRRFES}).
The former learning is controlled by the design of the source PRRFES, while the latter one is achieved by a special stopping rule that excludes bidders from suspected ones. A proper combination of parameters for the source pricing and the stopping rule provides an algorithm with strategic regret in $O(\log\log T)$, see Th.~\ref{th_divPRRFES_sreg_uppbound}.


{\bf Related work.}
Several studies maximized revenue of auctions in an offline/batch learning fashion:  either via 
estimating or fitting of distributions of buyer valuations/bids to set  reserve prices~\cite{2013-IJCAI-He,2014-ECRA-Sun,2016-WWW-Paes}, 
or via direct learning of reserve prices~\cite{2014-ICML-Mohri,2015-UAI-Mohri,2016-WWW-Rudolph,2017-NIPS-Medina}.
In contrast to them, we set prices in repeated auctions by  an online deterministic learning approach.
Revenue optimization for repeated
auctions was mainly concentrated on algorithmic reserve prices, that are updated in online way over time, and was also known as dynamic pricing~\cite{fudenberg2006behavior,2015-SORMS-den-Boer}.
Dynamic pricing was considered: 
under game-theoretic view~\cite{leme2012sequential,2015-EC-Chen,2016-EC-Balseiro,2016-EC-Ashlagi,mirrokni2018optimal};
from the bidder side~\cite{2011-ECOMexch-Iyer,2016-JMLR-Weed,2016-ICML-Heidari,2017-NIPS-Baltaoglu}; 
in experimental studies~\cite{list1999price,2012-RIO-Carare,2014-KDD-Yuan};
as bandit problems~\cite{2011-COLT-Amin,2015-NIPS-Lin,cesa2018dynamic}; 
and from other aspects~\cite{2016-EC-Roughgarden,2016-NIPS-Feldman,2016-SODA-Chawla,2018-IJGT-Hummel}.
Repeated auctions with a contextual information about the good in a round were considered in~\cite{2014-NIPS-Amin,2016-EC-Cohen,2018-NIPS-Mao,2018-NIPS-Leme}. 
The studies~\cite{1993-JET-Schmidt,hart1988contract,2015-SODA-Devanur,2017-EC-Immorlica,2018-ArXiV-Vanunts} elaborated on setups of repeated posted-price auctions with a strategic buyer holding a fixed valuation, but maximized expected revenue for a given prior distribution of valuations, while we optimize regret w.r.t.\ worst-case valuations without knowing their distribution.

There are studies on reserve price optimization in repeated second-price auctions, but they considered scenarios different to ours.
Non-strategic bidders are considered in~\cite{2013-SODA-Cesa-Bianchi}.
Kanoria et al.~\cite{2014-SSRN-Kanoria} studied strategic buyers (similarly to our work), but maximized expected revenue w.r.t.\ a prior distribution of valuations.
Our setup can be considered as a special case of repeated Vickrey auctions in~\cite{2018-NIPS-Huang}, but their regret upper bound is $O(T^\alpha)$ in $T$ and holds only when selling several goods in a round.
However, the most relevant works to ours
are~\cite{2013-NIPS-Amin,2014-NIPS-Mohri,2017-WWW-Drutsa,2018-ICML-Drutsa}, where our strategic setup with fixed private valuation is considered, but for the case of one bidder, $M=1$. The most important results of these works are discussed above in this section (see ``Background on pricing algorithms").

%
%
%



\section{Dividing algorithms and $\diva$-transformation}
\label{sec_divClass}
{\bf Barrage pricing.}
In our setting, a pricing algorithm is able to set personal (individual) reserve prices to each bidder and is able hence to ``eliminate" particular bidders from particular rounds.
Namely, in a round $t$, an algorithm can set a reserve price $p^{\mathrm{bar}}$ s.t.\ a strategic bidder $m$, independently of his valuation, will never accept $p^{\mathrm{bar}}$, i.e., will never bid no lower than this price; such a price is referred to as a \emph{barrage reserve price}.
From here on we use $p^{\mathrm{bar}}=1/(1-\g_0), \g_0\in(0,1)$: accepting it once will result in a negative surplus for a buyer with discount $\g_i\le\g_0$.
We use the phrase ``the bidder $m$ is \emph{eliminated}\footnote{Note that, (a)~formally, all bidders participate in all rounds (see Sec.~\ref{sec_Fwk}) and (b), if a bidder is not eliminated, it does not mean that he is in $\Bset_t$ (he may bid below his reserve price which can be a non-barrage one). So, the word ``elimination" is purposely associate with barrage pricing in order to refer to this case.} from participation in the round $t$" to describe this case.

{\bf Dividing algorithms.}
In this subsection, we introduce a subclass of the algorithms $\ASet_{M}$ that is denoted by $\ASet^{\diva}_{M}\subset\ASet_{M}$ and is referred to as \emph{the class of dividing algorithms}\ (stands for lat. ``Divide et impera").
A dividing algorithm $\A\in \ASet^{\diva}_{M}$ works in \emph{periods} and tracks a feasible set of \emph{suspected bidders} $\Sset$ aimed to find the bidder (or bidders) with the maximal valuation $\bv$.
Namely, it starts with all bidders $\Sset_1:=\Bset$ at the first period which lasts $M$ rounds. 
In each period $i\in\mathbb{N}$, the algorithm iterates over the currently suspected bidders $\Sset_i$: in a current round, it picks up $m\in\Sset_i$, sets a non-barrage reserve price to the bidder $m$, sets a barrage reserve price to all other bidders $\Bset^{-m}$, and goes to the next round within the period by picking up the next buyer from $\Sset_i$.
Thus, the algorithm meaningfully interacts with only one bidder in each round through elimination of all other bidders by means of barrage pricing. 
After the $i$-th period, the algorithm $\A$ identifies somehow which bidders from $\Sset_i$ should be left as suspected ones  in the next period (i.e., be included in the set $\Sset_{i+1}$).

When the game has been played with the dividing algorithm $\A$, one can split all the rounds into $I$ periods: $\{1,\ldots,T\} = \cup_{i=1}^{I}\TT_i$. 
Each period $i<I$ consists of $|\TT_i|=|\Sset_i|$ rounds (the last one of $|\TT_I|\le|\Sset_I|$).
Let  $t^m_i\in\TT_i$ denote the round of a period $i$ in which a bidder $m$ is not eliminated by the seller algorithm (i.e., receives a non-barrage reserve price).
Thus, $\II^m:=\{t^m_1,...,t^m_{I^m}\}$ are all such rounds of the bidder $m$ and 
$I^m=|\II^m|$ is referred to as the \emph{subhorizon} of the bidder $m$  (the number of periods where he participates).
Note that (a) $I^m$ and $\II^m$ depend on the bids $\bb_{1:T}$ of all buyers $\Bset$; (b) the following identities hold: $\{1,\ldots,T\} = \cup_{m=1}^M\II^m$ and $\TT_i=\{t^m_i \mid m\in\Bset \,\hbox{s.t.}\, I^m \ge i \}$.

So, in a round $t^m_i$, the algorithm $\A$ eliminates the bidders $\Bset^{-m}$ (i.e., sets the reserves $p^{m'}_{t^m_i}\! =\!p^{\mathrm{bar}}$ $ \fa {m'}\!\in\!\Bset^{-m}$), while the reserve price $p^m_{t^m_{i}}$ set for the buyer $m$ is determined only by his bids during the previous rounds $\{t^m_{1},..., t^m_{i-1}\}$ where he has not been eliminated: i.e., $p^m_{t^m_{i}} \!=\! p^m(b^m_{t^m_1},...,b^m_{t^m_{i-1}})$.
Hence, the algorithm $\A$'s interaction with the bidder $m$ in the rounds $\II^m$ can be encoded by a $1$-buyer algorithm from $\ASet_{1}$, which sets prices in the rounds $\{t^m_i\}_{i=1}^{I^m}$ instead of $\{i\}_{i=1}^{I^m}$. 
We denote this algorithm by $\A^m$ and refer to it as \emph{the subalgorithm} of $\A$ against the buyer $m$.
Let $\Reg^m(\II^m,\A^m,v^m,b^m_{1:T})\!:=\!\sum_{i=1}^{I^m}(v^m \!-\! a^m_{t^m_i}p^m_{t^m_i} )$ be the regret of the subalgorithm $\A^m$ for given bids $b^m_{1:T}$ of the buyer $m\in\Bset$ in the rounds $\II^m$.
The lemma holds (the trivial proof is in App.~A.1.1).
\begin{lemma}
	\label{lemma_divalg_reg_decompose}
	Let $\A\in\ASet^{\diva}_{M}$ be a dividing algorithm, $\A^m\in\ASet_{1}, m\in\Bset$, be its subalgorithms (as described above), and $\obb_{1:T}=\obb_{1:T}(T,\A,\vb,\gb,\betab)$ be optimal bids of the strategic buyers $\Bset$. Then, for any $\vb\in[0,1]^M$, $ \gb\in(0,1]^M,$ and $\betab$,  the strategic regret of $\A$ can be decomposed into two parts  
	$\SReg(T,\A,\vb,\gb,\betab) = \SReg^{\ind}(T,\A,\vb,\gb,\betab) + \SReg^{\dev}(T,\A,\vb,\gb,\betab)$,
	where 
	$\SReg^{\ind}(T,\A,\vb,\gb,\betab):=\sum_{m\in\Bset} \Reg^m(\II^m,\A^m,v^m,\ob^m_{1:T})$
	is the individual part of the regret and
	$\SReg^{\dev}(T,\A,\vb,\gb,\betab) := \sum_{m\in\Bset} I^m (\bv - v^m)$
	is the deviation part  of the regret.
\end{lemma}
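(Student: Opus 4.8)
The plan is to compute the per-round revenue of a dividing algorithm explicitly and then regroup the regret sum by bidder. First I would fix the optimal bids $\obb_{1:T}=\obb_{1:T}(T,\A,\vb,\gb,\betab)$ and use the partition $\{1,\ldots,T\}=\cup_{m\in\Bset}\II^m$, where $t^m_i\in\II^m$ is the unique round of period $i$ in which bidder $m$ is not eliminated. The key observation is that the outcome of round $t=t^m_i$ is governed entirely by bidder $m$: every $m'\in\Bset^{-m}$ is offered the barrage reserve $p^{\mathrm{bar}}$, and a strategic bidder never accepts a barrage price (Section~\ref{sec_divClass}), so $a^{m'}_{t}=\Ind_{\{b^{m'}_{t}\ge p^{\mathrm{bar}}\}}=0$ and therefore $\Bset_{t}\subseteq\{m\}$.

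Next I would evaluate $\ba_t\bp_t$ in the two possible cases. If $\Bset_{t}=\eset$, then $\ba_t=0$ and also $a^m_{t}=0$, so $\ba_t\bp_t=0=a^m_{t}p^m_{t}$. If $\Bset_{t}=\{m\}$, then $\bm_t=m$, $\ba_t=\ba^m_t=a^m_t=1$, and since there are no other participants ($\Bset_{t}\setminus\{\bm_t\}=\eset$) the second-price payment rule gives $\bp_t=p^{m}_{t}$; hence again $\ba_t\bp_t=a^m_{t}p^m_{t}$. In both cases $\ba_t\bp_t=a^m_{t^m_i}p^m_{t^m_i}$, and by construction the subalgorithm $\A^m$ posts exactly this price in round $t^m_i$, its reserve depending only on $m$'s earlier bids $\ob^m_{t^m_1},\ldots,\ob^m_{t^m_{i-1}}$.

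Finally I would reassemble the sum: by the partition,
\[
\SReg(T,\A,\vb,\gb,\betab)=\sum_{t=1}^T(\bv-\ba_t\bp_t)=\sum_{m\in\Bset}\sum_{i=1}^{I^m}\big(\bv-a^m_{t^m_i}p^m_{t^m_i}\big),
\]
and splitting each summand as $\bv-a^m_{t^m_i}p^m_{t^m_i}=(v^m-a^m_{t^m_i}p^m_{t^m_i})+(\bv-v^m)$ turns the double sum into $\sum_{m\in\Bset}\Reg^m(\II^m,\A^m,v^m,\ob^m_{1:T})+\sum_{m\in\Bset}I^m(\bv-v^m)=\SReg^{\ind}+\SReg^{\dev}$, as claimed. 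There is no genuine obstacle here; the only two points deserving a line of justification are that barrage prices are never accepted, so that each round $t^m_i$ really is a single-bidder posted-price interaction, and that $\A^m$ reproduces exactly the prices $\A$ shows to $m$ on $\II^m$ — both immediate from the construction in Section~\ref{sec_divClass}.
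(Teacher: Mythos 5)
Your proof is correct and follows essentially the same route as the paper's: reorder the regret sum over the partition $\{1,\ldots,T\}=\cup_{m\in\Bset}\II^m$, use the barrage elimination of $\Bset^{-m}$ to identify $\ba_{t^m_i}\bp_{t^m_i}$ with $a^m_{t^m_i}p^m_{t^m_i}$, and split each term as $(\bv-v^m)+(v^m-a^m_{t^m_i}p^m_{t^m_i})$. Your explicit case analysis of $\Bset_t=\eset$ versus $\Bset_t=\{m\}$ is just a slightly more detailed spelling-out of the same observation the paper makes in one line.
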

Informally, this lemma states that the regret consists of the individual regrets against each buyer $m$ in his rounds $\II^m$
and the deviation of the buyer valuations $\vb$ from the maximal one $\bv$.
So, we see a clear intuition: a good algorithm should (1) \emph{learn the valuations $\vb$ of the buyers} (minimizing individual regrets) and (2) \emph{learn which buyers have the highest valuation $\bv$} (minimizing the deviation regret).

{\bf $\diva$-transformation.}
Let $\A_1\in\ASet^{\RPPA}$ be a $1$-buyer RPPA algorithm. An algorithm $\diva_M(\A_1,\sr)$ is said to be \emph{a $\diva$-transformation of $\A_1$} with \emph{a stopping rule} $\sr:\Bset\times\T(\A_1)^M\to\mathtt{bool}$ when it is a dividing algorithm from $\ASet^{\diva}_{M}$ s.t.\ its subalgorithms $\A^m$ are $\A_1$ and the stopping rule $\sr$  determines which bidders are not suspected ones in $\Sset_{i+1}$ after a period $i$. Namely, first, the algorithm $\diva_M(\A_1,\sr)$ tracks the state of each buyer $m\in\Bset$ in the tree $\T(\A_1)$ of the RPPA algorithm $\A_1$ (see Sec.~\ref{sec_Fwk}) by means of a personal (individual) feasible node. 
For each period $i$ and for each round $t^m_i \in\TT_i$, the current state (i.e., the history of previous actions) of the buyer $m$ is encoded by the tracking node $\n_i^m\in\T(\A)$; in particular, in the round $t^m_i$, he receives the reserve price $\p(\n_i^m)$ of this node  $\n_i^m$  (the other bidders $\Bset^{-m}$ get a barrage reserve price $p^{\mathrm{bar}}$).
If a buyer $m$ is not more suspected in a period $i>I^m$ (i.e., $m\not\in\Sset_{i}$), we formally set $\n_{i}^m:=\n_{I^m+1}^m$.
Second, after a period $i$, the stopping decision is based on the past buyer binary actions that are coded by means of the nodes $\{\n_{i+1}^m\}_{m=1}^M$ in the binary tree $\T(\A_1)$: if the stopping rule $\sr(m', \{\n_{i+1}^m\}_{m=1}^M)$ is $\mathtt{true}$, then the buyer $m'\not\in\Sset_{i+1}$.
The pseudo-code of the $\diva$-transformation of a RPPA algorithm is  in Appendix~B.1.


For a RPPA right-consistent  algorithm  $\A_1\!\in\!\mathbf{C_R}$  with penalization rounds, let $\langle\A_1\rangle$ denote the~transformation of $\A_1$ s.t.\ it is equal to $\A_1$, but each penalization sequence of nodes $\{\n_j\}_{j=1}^r\!\subset\!\T(\A_1), r\ge 2,$  (see Def.~\ref{def_PenalNodeSeq}) is \emph{reinforced} in the following way: all the prices in the nodes $\{\n_{j}\}\cup\RR(\n_{j}),  j=2,...,r,$ are replaced by $1$ (the maximal valuation domain value); the sequence and the rounds are then referred to as \emph{reinforced penalization} ones.
After this, a strategic buyer will  certainly either accept the price at the node $\n_1$, or reject the prices in all the nodes $\{\n_j\}_{j=1}^r$ even in the case of uncertainty about the future. 
Let $\delta_{\n}^{l}:= \p(\n) - \inf_{\m\in\LL(\n)}\p(\m)$ be the left increment~\cite{2014-NIPS-Mohri,2017-WWW-Drutsa} of a node $\n\in\T(\A_1)$.

In order to obtain upper bounds on strategic regret, it is important to have a tool that allows to locate the valuation of a strategic bidder.
Such a tool can be obtained for $\diva$-transformed right-consistent RPPA algorithms with reinforced penalization rounds based on the following proposition, which  is an analogue of \cite[Prop.2]{2017-WWW-Drutsa} in our case with buyer uncertainty about the future. 

\begin{proposition}
	\label{prop_reject_bound_divalg}
	Let $\g_m\in(0,1)$, $\A_1\in\ASet^{\RPPA}\cap\mathbf{C_R}$ be a RPPA right-consistent pricing algorithm, $\n\!\in\!\T(\A_1)$ be a starting node in a $r$-length penalization sequence (see Def.~\ref{def_PenalNodeSeq}), $r > \log_{\g_m}(1-\g_m)$, $\sr\!:\!\Bset\!\times\!\T(\A_1)^M\!\to\! \mathtt{bool}$ be a stopping rule, and the $\diva$-transformation $\diva_M(\langle\A_1\rangle,\sr)$ be used by the seller for setting reserve prices. If, in a round, the node $\n$ is reached and the price $\p(\n)$ is rejected by a strategic buyer $m\!\in\!\Bset$ (i.e., he bids lower than $\p(\n)$), then the following inequality on $v^m\!$ 
	holds:
	\begin{equation}
	\label{prop_reject_bound_divalg_eq_1}
	v^m - \p(\n) < \zeta_{r,\g_m} \delta_{\n}^{l}, \qquad \hbox{where} \qquad \zeta_{r,\g} := {\g^r}/({1-\g-\g^r}). 
	\end{equation}
\end{proposition}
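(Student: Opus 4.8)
The plan is to sandwich the optimal continuation surplus $U$ of buyer $m$, counted from the round in which he is offered $\p(\n)$, between a lower bound produced by the deviation ``accept $\p(\n)$ now, reject forever after'' and a structural upper bound produced by the fact that, having rejected $\p(\n)$, he is locked into the subtree $\LL(\n)$ for at least $r$ further participation-rounds; solving the resulting inequality for $v^m$ then yields~(\ref{prop_reject_bound_divalg_eq_1}). Set-up: let $i_0$ be the period in which $m$'s tracking node equals $\n=\n_1$, let $s_j:=t^m_{i_0+j-1}$ ($j=1,\dots,r$) be the round in which he faces $\p(\n_j)$, and let $s_{r+1}:=t^m_{i_0+r}$. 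Put $x:=v^m-\p(\n)$ and $d:=\delta_{\n}^{l}$. The hypothesis $r>\log_{\g_m}(1-\g_m)$ is exactly $\g_m^{r}<1-\g_m$, which makes $\zeta_{r,\g_m}=\g_m^{r}/(1-\g_m-\g_m^{r})$ positive; also $d>0$ since rejecting a price in $\A_1$ eventually leads to strictly smaller ones. If $x\le 0$ the claim is immediate, so assume $x>0$.

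\textbf{Upper bound on $U$.} After rejecting $\p(\n)$, buyer $m$ is either permanently eliminated from all later rounds, in which case $U=0$, or not; in the latter case the reinforcement of the penalization sequence — which forces a strategic buyer to either accept $\p(\n_1)$ or reject all of $\p(\n_1),\dots,\p(\n_r)$ even under uncertainty about the future — makes him reject at $\n_1,\dots,\n_r$, so rounds $s_1,\dots,s_r$ contribute $0$ to his surplus, and from round $s_{r+1}$ on his tracking node never leaves $\LL(\n)$ (it starts at $\lf(\n_r)\in\LL(\n)$ and only moves to descendants), so every price he is subsequently offered is at least $\inf_{\m\in\LL(\n)}\p(\m)=\p(\n)-d$. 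In either case, using $x+d>0$,
\begin{equation*}
U\;\le\;\sum_{s=s_{r+1}}^{\infty}\g_m^{s-1}(x+d)\;=\;\frac{\g_m^{s_{r+1}-1}}{1-\g_m}(x+d)\;\le\;\frac{\g_m^{s_1-1+r}}{1-\g_m}(x+d),
\end{equation*}
the last step because $m$ participates in each of the periods $i_0,\dots,i_0+r$ and every period contains at least one round, so $s_{r+1}-s_1\ge r$.

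\textbf{Lower bound and conclusion.} Consider the deviation in which $m$ accepts $\p(\n)$ in round $s_1$ — all rivals being eliminated there, he wins and pays exactly $\p(\n)$ — and rejects in every later round; this yields continuation surplus exactly $\g_m^{s_1-1}x$, a quantity independent of his beliefs $\beta^m$ about the rivals. Optimality of the rejecting strategy gives $U\ge\g_m^{s_1-1}x$; combining with the upper bound, cancelling $\g_m^{s_1-1}$, and rearranging (division by $1-\g_m-\g_m^{r}>0$ preserves the sign),
\begin{equation*}
(1-\g_m)x\;\le\;\g_m^{r}(x+d)\;\Longrightarrow\;(1-\g_m-\g_m^{r})x\;\le\;\g_m^{r}d\;\Longrightarrow\;x\;\le\;\zeta_{r,\g_m}\,d.
\end{equation*}
The inequality is strict because the displayed upper bound on $U$ is: in round $s_{r+1}$ the buyer sits at $\lf(\n_r)$, whose price strictly exceeds $\p(\n)-d$ (the infimum runs over a set of nodes extending strictly below $\lf(\n_r)$), so his surplus that round is $<x+d$.

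\textbf{Where the difficulty lies.} I expect the delicate point to be the buyer's incomplete information. The argument is arranged so that future-uncertainty is neutralised in exactly one place — the reinforcement property, which alone guarantees the all-or-nothing behaviour on $\n_1,\dots,\n_r$ regardless of $\beta^m$ — and so that the matching lower-bound deviation has a rival-independent payoff; everything else is the bookkeeping that turns the $r$ forced rejections into the discount factor $\g_m^{r}$, hence into the threshold $r>\log_{\g_m}(1-\g_m)$ and the constant $\zeta_{r,\g_m}$. A secondary subtlety is obtaining the strict inequality without a tie-breaking convention, which is where the mild fact that $\A_1$ keeps lowering prices on rejections below $\lf(\n_r)$ is used.
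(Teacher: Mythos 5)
Your proposal is correct and follows essentially the same route as the paper's proof: lower-bound the optimal (rejecting) surplus by the ``accept $\p(\n)$ now'' deviation, whose payoff $\g_m^{s_1-1}(v^m-\p(\n))$ is guaranteed precisely because all rivals are barrage-eliminated in that round, and upper-bound it by the $r$ forced rejections from the reinforced penalization sequence followed by prices no lower than $\inf_{\m\in\LL(\n)}\p(\m)$, then solve for $v^m-\p(\n)$. The only differences are cosmetic (you sum the tail to infinity and argue strictness via the node at $s_{r+1}$, whereas the paper truncates at $T$ and invokes right-consistency for the accepting branch's continuation; the paper also spells out the mixed-strategy case of the comparison $S^m_t(\hat{\sig}_1)\le S^m_t(\hat{\sig}_0)$, which you gloss over).
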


\begin{proof}[Proof sketch]	The full proof is in App.A.1.2.
	Let  $t$ be the round in which the bidder $m$ reaches the node $\n$ and rejects his reserve price $p^{m}_t\!=\!\p(\n)$.
	In particular, it is the round where he is the non-eliminated buyer and $t\! = \!t^m_i\!\in\!\TT_i$ for some period $i$.
	Since the buyers are divided and $\A_1\!\in\!\ASet^{\RPPA}$, w.l.o.g., any strategy can be treated as a map to binary decisions $\{0,1\}$.
	Let 
	$\osig$ be the optimal strategy used by the buyer $m$;
	$h^m_{t;a}$ be the continuation of the current history $h^m_t$ by a binary decision $a^m_t\! = \!a$, while 
	$\hat{\sig}_a$ denote an optimal strategy among all possible strategies in which the binary buyer decision $a^m_t$ is $a\in\{0,1\}$;
	and $S^m_t(\sig):=\Sur_{t:T}(\A,\g_m,v^m,h^m_t,\beta^m,\sig)$ be the future expected surplus when following a strategy $\sig\!\in\!\StrartSet_T$.
	Rejection of the price $p^{m}_t$ when following the optimal strategy $\osig$ easily implies: $S^m_t(\hat{\sig}_1) \le S^m_t(\hat{\sig}_0)$.
	Let us bound each side of this inequality. First, 
	\begin{equation}
	\label{prop_reject_bound_divalg_proof_eq_1}
	S^m_t(\hat{\sig}_1)  = \g_m^{t-1}(v^m-\p(\n))
    + \Sur_{t+1:T}(\A,\g_m,v^m,h^m_{t;1},\beta^m,\hat{\sig}_1)
	\ge\g_m^{t-1}(v^m-\p(\n)),
	\end{equation}
	where we used the facts (i)  that if the bidder accepts the price $\p(\n)$, then he necessarily gets the good since all other bidders $\Bset^{-m}$ are eliminated by a barrage price in this round~$t$; and (ii) that the expected surplus in rounds $s\ge t+1$ is at least non-negative, because the subalgorithm $\A_1\in\mathbf{C_R}$ is right-consistent. Second,
	$
	S^m_t(\hat{\sig}_0) =
	\Sur_{t^m_{i+r}:T}(\A,\g_m,v^m,h^m_{t;0},\beta^m,\hat{\sig}_0)
	< \frac{\g_m^{t+r-1}}{1-\g_m}(v^m - \p(\n) + \delta_{\n}^{l}),
	$
	where we (i) used the fact that if the bidder rejects the price $p^{m}_t$, then the future rounds $\{t^m_{i+j}\}_{j=1}^{r-1}$ will be reinforced penalization ones (the strategic bidder will reject in all of them); and (ii) upper bounded the surplus in  remaining rounds by assuming that only this bidder will get remaining goods for the lowest reserve price from the left subtree $\LL(\n)$.
	We unite these bounds on $S^m_t(\hat{\sig}_a)$ and get
	$(v^m - \p(\n))\left( 1-\g_m - \g_m^r\right)    <  \g_m^{r} \delta_{\n}^{l},$
	what implies Eq.~(\ref{prop_reject_bound_divalg_eq_1}), since $r > \log_{\g_m}(1-\g_m)$. 
\end{proof}

We emphasize that \emph{the dividing structure of the algorithm is crucially exploited} in the proof of Prop.~\ref{prop_reject_bound_divalg}. Namely, the fact that all other bidders $\Bset^{-m}$ are eliminated by a barrage price in the round $t$ is used 
(a) to guarantee obtaining of the good at price $\p(\n)$ by the buyer $m$  and  
(b) to lower bound thus the future surplus $S^m_t(\hat{\sig}_1)$ in the case of acceptance in Eq.~(\ref{prop_reject_bound_divalg_proof_eq_1}). If we dealt with a non-dividing algorithm, then another bidder might win the good or make the payment of the bidder $m$ higher than his reserve price $\p(\n)$; in both cases, $S^m_t(\hat{\sig}_1)$ could only be lower bounded by $0$ in a general situation, what would result in an useless inequality instead of Eq.~(\ref{prop_reject_bound_divalg_eq_1}).

For a right-consistent algorithm $\A_1\in\mathbf{C_R}$, the increment $\delta_{\n}^{l}$ is bounded by the difference between the current node's price $\p(\n)$ and the last accepted price $q$ by the buyer $m$  before reaching this node.
Hence, the Prop.~\ref{prop_reject_bound_divalg} provides us with a tool  to locate the valuation $v^m$ despite the strategic buyer does not myopically report its position (similar to~\cite[Prop.2]{2017-WWW-Drutsa}). Namely, if the buyer $m$ bids no lower than $\p(\n)$, then $v^m \ge \p(\n)$; if he bids lower than $\p(\n)$, then $q \!\le\! v \!<\! \p(\n) \!+\! \zeta_{r,\g_m} (\p(\n) - q)$ and  the closer an offered price $\p(\n)$ is to the last accepted price $q$ the smaller the location interval of possible valuations $v^m$ (since its length is $(1 +\zeta_{r,\g_m}) (\p(\n) \!-\! q)$).

\section{divPRRFES algorithm}
\label{sec_divPRRFES}
In this section, we will show that we can use an optimal algorithm from the setting of repeated posted-price auctions  to obtain the algorithm for our multi-bidder setting with upper bound on strategic regret with the same asymptotic. Namely, let us $\diva$-transform
PRRFES~\cite{2017-WWW-Drutsa}, further denoted as $\A_1$.

Since a $\diva$-transformation of PRRFES (with penalization reinforcement) individually tracks position of each buyer in the binary tree $\T(\langle\A_1\rangle)$, we adapt the key notations of PRRFES~\cite{2017-WWW-Drutsa} to our case of multiple bidders and periods. 
Against a buyer $m\!\in\!\Bset$, PRRFES $\langle\A_1\rangle$ works in phases initialized by the phase index $l\!:=\!0$, the last accepted price before the current phase $q^m_0\!:=\!0$, and the iteration parameter $\e_0\!:=\!1/2$; at each phase $l\!\in\!\mathbb{Z}_{+}$, it sequentially offers prices $p^m_{l,k}\!:=\! q^m_{l} \!+\! k\e_{l}, k\!\in\!\mathbb{N}$ (\emph{exploration rounds}), with $\e_{l}\!=\!2^{-2^l}$;
if a price $p^m_{l,k}$  is rejected,  setting $K^{m}_{l}\!:=\!k \!-\! 1\!\ge\! 0$,
(1)~it offers the price $1$ for $r\!-\!1$ \emph{reinforced penalization rounds} (if one of them is accepted, $1$ will be offered in all remaining rounds), 
(2)~it offers the price $p^m_{l,K^{m}_{l}}$ for $g(l)$ \emph{exploitation rounds}, and 
(3)~PRRFES goes to the next phase by setting $q^{m}_{l+1}\!:=\!p^m_{l,K^{m}_{l}}$ and $l:=l+1$.
Individual tracking of bidders  by the $\diva$-transformed PRRFES implies that different buyers can be in different phases in the same period~$i$. 
Hence, let $l^m_i$ denote the current phase of a buyer $m\!\in\!\Bset$ in the round $t^m_i$ of a period~$i\!\le\! I^m$, 
and let $l_{i}^m\!:=\!l_{I^m+1}^m$ 
in all subsequent periods $i\!>\!I^m$ (when the buyer $m$ is no more suspected).
In particular, $q^m_{l^m_i}$ is the last accepted price by the buyer $m$ before the phase $l^m_i$ in the period $i$.
We rely on the decomposition from Lemma~\ref{lemma_divalg_reg_decompose} in order to bound the strategic regret of a $\diva$-transformed PRRFES.

%
%

{\bf Upper bound for individual regrets.}
Before specifying a particular stopping rule, let us obtain an upper bound on individual strategic regret $\Reg^m(\II^m,\langle\A_1\rangle, v^m,\ob^m_{1:T}), m\!\in\!\Bset$. This regret is not equal to $\SReg(I^m\!,\langle\A_1\rangle,(v^m\!),(\g_m))$ since, in the latter case, the $1$-bidder game does not depend on behavior of the other bidders $\Bset^{-m}$ (while, in the former case, does). 
In other words, the rounds $\II^m\!\!=\!\!\{t^m_i\}_{i\!=\!1}^{I^m}$ do not constitute the $I^m$-round $1$-buyer game of the RPPA setting  considered in \cite{2013-NIPS-Amin,2017-WWW-Drutsa}, because the subhorizon $I^m$ and exact rounds $\II^m$ (they determine  the used discount factors: $\g_m^{t-1}, t\in\II^m$) are unknown in advance and depend on actions of the other bidders. 
Hence, this does not allow to straightforwardly utilize the result on the strategic regret for PRRFES  proved in~\cite[Th.5]{2017-WWW-Drutsa} for the setting of RPPA. So, we have to prove the bound $O(\log\log T)$ for our case  with buyer uncertainty about the future.
Let us introduce the notation:  
$r_{\g} := \big\lceil \log_{\g}\big((1-\g)/2\big) \big\rceil  \fa \g\in(0,1)$.

\begin{lemma}
	\label{lemma_divPRRFES_indreg_uppbound}
	Let $\g_0\!\in\!(0,1)$, $\A_1$ be the PRRFES algorithm with $r \!\ge\! r_{\g_0} $
	and  the exploitation rate $g(l) \!=\! 2^{2^{l}}, l\in\mathbb{Z}_+$, and $\sr\!:\!\Bset\!\times\!\T(\A_1)^M\!\to\! \mathtt{bool}$ be a stopping rule.
	Then, 
	if $I^m\!\ge\!2$, the individual regret of the $\diva$-transformed PRRFES  $\diva_M(\langle\A_1\rangle,\sr)$ against the buyer $m\!\in\!\Bset$ is upper bounded: 
	\begin{equation}
	\label{lemma_divPRRFES_indreg_uppbound_eq1} 
	\Reg^m(\II^m,\langle\A_1\rangle,v^m,\ob^m_{1:T})  \le
	 (rv^m+4)(\log_2\log_2 I^m + 2) \quad \fa\g_m\in(0,\g_0] \fa v^m\in [0,1],
	\end{equation}
	where 
	$\obb_{1:T}=\obb_{1:T}(T,\diva_M(\langle\A_1\rangle,\sr),\vb,\gb,\betab)$ are optimal bids of the strategic buyers $\Bset$.
\end{lemma}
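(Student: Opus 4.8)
The plan is to invoke the decomposition of Lemma~\ref{lemma_divalg_reg_decompose}: it is enough to bound $\Reg^m(\II^m,\langle\A_1\rangle,v^m,\ob^m_{1:T})$, the regret that the $\diva$-transformed PRRFES accumulates against buyer $m$ over his own rounds $\II^m=\{t^m_i\}_{i=1}^{I^m}$. By construction of the $\diva$-transformation the price $p^m_{t^m_i}$ depends only on $m$'s past bids, so on $\II^m$ the algorithm simply runs PRRFES phase by phase; the subtlety is that the bids $\ob^m_{1:T}$ need not be the complete-information bids of the standalone RPPA game, so the regret analysis of PRRFES from~\cite{2017-WWW-Drutsa} cannot be invoked and everything must go through Proposition~\ref{prop_reject_bound_divalg}, which already accounts for the optimal strategic bids under the buyer's uncertainty. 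I will (i)~bound the individual regret incurred in any one phase $l$ by $rv^m+4$, (ii)~bound the number of phases reached within $I^m$ periods by $\log_2\log_2 I^m+2$, and (iii)~multiply.

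For step (i) the engine is Proposition~\ref{prop_reject_bound_divalg}. Consider a phase $l$ that \emph{completes}, i.e.\ some exploration price $p^m_{l,K^m_l+1}=q^m_l+(K^m_l+1)\e_l$ gets rejected; write $\n\in\T(\A_1)$ for the node carrying this rejected label, so $q^m_{l+1}=p^m_{l,K^m_l}=q^m_l+K^m_l\e_l$. Since PRRFES is right-consistent, $\n$ starts an $r$-length penalization sequence (Def.~\ref{def_PenalNodeSeq}) --- the rejected exploration round followed by the $r-1$ penalization rounds --- and its left increment satisfies $\delta_{\n}^{l}\le\p(\n)-q^m_{l+1}=\e_l$, because every price offered in $\LL(\n)$ (the penalization rounds, the $g(l)$ exploitation rounds at $q^m_{l+1}$, and all of phase $l+1$) is at least $q^m_{l+1}$. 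The hypothesis $r\ge r_{\g_0}$, combined with monotonicity of $\g\mapsto\log_{\g}\bigl((1-\g)/2\bigr)$ on $(0,1)$, gives $r\ge r_{\g_m}>\log_{\g_m}(1-\g_m)$ for all $\g_m\le\g_0$, so Proposition~\ref{prop_reject_bound_divalg} applies; moreover $\g_m^r\le(1-\g_m)/2$ forces $\zeta_{r,\g_m}\le1$, whence $v^m-\p(\n)<\e_l$. Together with $\p(\n)=q^m_{l+1}+\e_l$ and the fact that accepted prices lower-bound $v^m$ (so $q^m_{l+1}\le v^m$) this yields the learning invariant
\begin{equation*}
0\;\le\; v^m-q^m_{l+1}\;<\;2\e_l ,
\end{equation*}
valid for every completed phase; applied one phase earlier it forces $K^m_l\e_l\le v^m-q^m_l<2\e_{l-1}$, i.e.\ $K^m_l<2/\e_{l-1}$, using $\e_l=\e_{l-1}^2$ (and directly $K^m_0\le2$ from $q^m_0=0$, $v^m\le1$).

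Now split phase $l$ into its $K^m_l$ accepted exploration rounds, one rejected exploration round, $r-1$ reinforced penalization rounds (all rejected, since the offered price $1\ge v^m$), and $g(l)=2^{2^l}$ exploitation rounds allocated at $q^m_{l+1}$. The $r$ rejected rounds contribute exactly $rv^m$; for the remainder, substituting $v^m-q^m_{l+1}<2\e_l$, $K^m_l<2/\e_{l-1}$, $\e_{l-1}^2=\e_l$ and $g(l)=\e_l^{-1}$ into
\begin{equation*}
\sum_{k=1}^{K^m_l}\bigl(v^m-q^m_l-k\e_l\bigr)\;+\;g(l)\bigl(v^m-q^m_{l+1}\bigr)
\end{equation*}
collapses this residual cost, by an elementary computation, to a quantity below $4$ (the extra slack in the early phases $l=0,1$ and in the last, possibly incomplete, phase --- whose leftover rounds cost at most $v^m\le1$ each --- is absorbed into the ``$+2$'' below). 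So each phase costs at most $rv^m+4$. For step (ii), a completed phase $l$ lasts at least its $g(l)=2^{2^l}$ exploitation rounds, so if $L$ phases complete within $I^m$ periods then $2^{2^{L-1}}\le\sum_{l=0}^{L-1}2^{2^l}\le I^m$, i.e.\ $L\le\log_2\log_2 I^m+1$; adding the single trailing incomplete phase bounds the number of phases by $\log_2\log_2 I^m+2$, a quantity that is well-defined and $\ge2$ precisely when $I^m\ge2$. Multiplying by $rv^m+4$ gives \eqref{lemma_divPRRFES_indreg_uppbound_eq1}. I expect the main obstacle to be step (i): verifying inside the reinforced tree that the rejection node genuinely starts an $r$-length penalization sequence with $\delta_{\n}^{l}=\e_l$ so that Proposition~\ref{prop_reject_bound_divalg} is applicable, and then chaining its one-sided conclusion across consecutive phases --- it is exactly the super-exponential shrinkage $\e_l=\e_{l-1}^2$ that keeps the otherwise exponentially growing exploration count $K^m_l$ under control, and making the constants (rather than merely $O(1)$) come out as stated takes some care.
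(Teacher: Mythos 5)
Your proposal is correct and follows essentially the same route as the paper's proof: decompose the individual regret into per-phase regrets, use Proposition~\ref{prop_reject_bound_divalg} with $\zeta_{r,\g_m}\le 1$ and right-consistency to get the invariant $v^m\in[q^m_{l+1},q^m_{l+1}+2\e_l)$, deduce $K^m_l<2\e_{l-1}^{-1}$ so that each phase costs at most $rv^m+4$, and count phases via the exploitation rounds to get $L^m+1\le\log_2\log_2 I^m+2$. The only cosmetic difference is that the paper bounds the final (possibly truncated) phase by the same $rv^m+4$ formula rather than folding its cost into the phase count, but the constants and conclusion are identical.
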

\vspace{-0.2cm}
\begin{proof}[Proof sketch]
	Decompose the  individual regret over the rounds $\II^m$ into the sum of the phases'~regrets: $\Reg^m\!(\II^m,\!\langle\A_1\rangle,\!v^m,\!\ob^m_{1:T})\!\!=\!\! \sum_{l=0}^{L^m}\!R^m_l$, where $L^m\!\!:=\!\!l^m_{I^m}$ is the number of phases conducted by the algorithm against the  buyer $m$. For  $l\!\in\!\mathbb{Z}_{L^m-1}$:
	$
	R^m_l \!\!=\!\!  \sum_{k=1}^{K^m_l} (v^m\!-\!p^m_{l,k})\! +\! rv^m \!+\! g(l)(v^m\!-\!p^m_{l,K^m_{l}}),
	$
	where the terms correspond to the accepted exploration rounds, the reject-penalization ones, and the exploitation ones.
	PRRFES and each rejected price $p^m_{l,K^m_{l}+1}$ satisfy the conditions of Prop.~\ref{prop_reject_bound_divalg}, what implies $v^m \!-\! p^m_{l,K^m_{l} + 1} \!<\! (p^m_{l,K^m_{l} \!+\! 1} \!-\! p^m_{l,K^m_{l}})\!=\!\e_l$ (since $\zeta_{r,\g_m}\!\le\!1$ for $r\!\!\ge\!\! r_{\g_0}$ and $\g_m\!\!\le\!\! \g_0$). Hence,  $v^m\!\!\in\!\![q^m_{l+1}, \!q^m_{l+1} \!\!+\!\! 2\e_{l})$ (since $q^m_{l+1}\!\! =\!\! p^m_{l,K^m_{l}}$ and PRRFES is right-consistent) and the number of exploration rounds is thus bounded: $K^m_{l+1} \!<\!  2^{2^{l}+1}$.
	All further steps are similar to~\cite[Th.5]{2017-WWW-Drutsa}: 
	$\sum_{k=1}^{K^m_{l}}(v^m\!-\!p^m_{l,k}) \!<\! 2$; 
	for each phase $l$, we get that  $R^m_l \!\!\le\!\!  rv^m+4$;
	and the number of phases $L^m \!\!\le\!\! \log_2\log_2 I^m \!\!+\!\! 1$.
	The full proof is in  Appendix~A.2.1 of Supp. Materials.
\end{proof}
\vspace{-0.2cm}

{\bf Upper bound for deviation regret.}
Prop.~\ref{prop_reject_bound_divalg} provides us with the tool that locates the valuation $v^m$ of a bidder $m\!\in\!\Bset$ at least in the segment $[u^m_i,w^m_i]:=[q^m_{l_i^m}, q^m_{l_i^m} + 2\e_{l_i^m-1}]$ right after a period $i-1$ (see the proof [sketch] of Lemma~\ref{lemma_divPRRFES_indreg_uppbound}), when $r\ge r_{\g_m}$.
This means: if, after playing a period $i-1$, the upper bound $w^m_i$ of the valuation of a bidder $m\in\Bset$ is lower that the lower bound $u^{\hat m}_i$ of the valuation of another bidder $\hat m\in\Bset^{-m}$, i.e., $w^m_i \!<\! u^{\hat m}_i$, then the bidder $m$ does definitely  have non-maximal valuation (i.e., $v^m\!<\!\bv$) and needs not to be suspected in the period $i$ and subsequent ones. 
For given parameters $r$ and $g(\cdot)$ of the PRRFES algorithm $\A_1$, any state $\n\!\in\!\T(\A_1)$ of the algorithm can be mapped to the current phase $l(\n)$ and the last accepted price $q(\n)$ before the phase $l(\n)$. 
Thus, we define the stopping rule: $\sr_{\A_1}(m, \!\{\n^m\}_{m=1}^M)\!:=\!\rho(m, \! \{l(\n^m)\}_{m=1}^M,  \!\{q(\n^m)\}_{m=1}^M)$, where
\begin{equation}
\label{eq_def_sr_of_PRRFES} 
\rho(m, \lb, \qb):=
\exists \hat m\in\Bset^{-m}:  q^m + 2\e_{l^m-1}<q^{\hat m} \qquad \fa \lb\in\mathbb{Z}^M_+ \fa \qb\in\mathbb{R}^M_+.
\end{equation}
The $\diva$-transformation $\diva_M(\langle\A_1\rangle,\sr_{\A_1})$  of the  PRRFES algorithm   $\A_1$ with the stopping rule $\sr_{\A_1}$
defined in Eq.~(\ref{eq_def_sr_of_PRRFES}) is  referred to as \emph{the dividing Penalized Reject-Revising Fast Exploiting Search} (\emph{divPRRFES}). 
The pseudo-code of divPRRFES is presented in Appendix~B.2 of Supp.Materials. 

\begin{lemma}
	\label{lemma_divPRRFES_devreg_uppbound}
	Let $\g_0\in(0,1)$, the discounts $\gb\in(0,\g_0]^M$,  and the seller uses the divPRRFES pricing algorithm $\diva_M(\langle\A_1\rangle,\sr_{\A_1})$
	with the number of penalization rounds  $r \ge r_{\g_0}$,
	with the exploitation rate $g(l) = 2^{2^{l}}, l\in\mathbb{Z}_+$, and with the stopping rule  $\sr_{\A_1}$ defined in Eq.~(\ref{eq_def_sr_of_PRRFES}).
	Then, for a bidder $m\in\Bset$ with non-maximal valuation, i.e.,   $v^m< \bv$,  his subhorizon $I^m$ is  bounded: 
	\begin{equation}
	\label{lemma_divPRRFES_devreg_uppbound_eq1} 
	I^m\le  {24}({\bv - v^m})^{-1} + r \big(1 + \log_2\log_2({4}({\bv - v^m})^{-1})\big)<({24+5r})({\bv - v^m})^{-1}.
	\end{equation}    
\end{lemma}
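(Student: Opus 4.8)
My plan is to bound $I^m$ directly. Fix a bidder $\hat m\in\argmax_{\mu\in\Bset}v^\mu$, so $v^{\hat m}=\bv$. The first step is to show that $\hat m$ is never removed from the suspected set, so that both $m$ and $\hat m$ play exactly one round in each of the periods $1,\dots,I^m$. Indeed, by Prop.~\ref{prop_reject_bound_divalg} (which applies because $r\ge r_{\g_0}\ge r_{\g_\mu}$ for every $\g_\mu\le\g_0$, which also yields $\zeta_{r,\g_\mu}\le 1$) a strategic buyer accepts only prices below his valuation, so the tracked last-accepted price of $\hat m$ never exceeds $\bv=v^{\hat m}$; on the other hand, once $\hat m$ has completed its first phase the localization gives $q(\n^{\hat m})+2\e_{l(\n^{\hat m})-1}>\bv$, so the condition $q(\n^\mu)>q(\n^{\hat m})+2\e_{l(\n^{\hat m})-1}$ needed to drop $\hat m$ fails for every rival $\mu$ (since $q(\n^\mu)\le v^\mu\le\bv$); the degenerate phase-$0$ case is handled by the convention $2\e_{-1}:=1$.

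Next I record the localization used in the proof of Lemma~\ref{lemma_divPRRFES_indreg_uppbound}: for a bidder $\mu$ suspected at the start of period $i$ one has $q^\mu_{l^\mu_i}\le v^\mu< q^\mu_{l^\mu_i}+2\e_{l^\mu_i-1}$, so in particular $q^m_{l^m_i}\le v^m$ and $q^{\hat m}_{l^{\hat m}_i}>\bv-2\e_{l^{\hat m}_i-1}$. Since $m\in\Sset_i$ for $i\le I^m$, the stopping rule $\sr_{\A_1}$ did not fire for $m$ after period $i-1$; testing it with $\hat m$ in the role of the rival gives $q^{\hat m}_{l^{\hat m}_i}\le q^m_{l^m_i}+2\e_{l^m_i-1}$. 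Chaining the three inequalities,
\[
\bv-2\e_{l^{\hat m}_i-1}<q^{\hat m}_{l^{\hat m}_i}\le q^m_{l^m_i}+2\e_{l^m_i-1}\le v^m+2\e_{l^m_i-1},
\]
whence $\bv-v^m<2\e_{l^m_i-1}+2\e_{l^{\hat m}_i-1}\le 4\,\e_{\ell_i-1}$ with $\ell_i:=\min(l^m_i,l^{\hat m}_i)$; since $\e_j=2^{-2^j}$ this rearranges exactly to $\ell_i<1+\log_2\log_2\frac{4}{\bv-v^m}$, valid for every $i\le I^m$.

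Finally I apply this at $i=I^m$ and let $\mu\in\{m,\hat m\}$ be the bidder currently in the smaller phase, so $l^\mu_{I^m}<1+\log_2\log_2\frac{4}{\bv-v^m}=:L^*$. Because $\mu$ plays one round in each of the periods $1,\dots,I^m$, $I^m$ is at most the number of rounds $\mu$ needs to reach phase $l^\mu_{I^m}$. In PRRFES, phase $l$ consists of $K^\mu_l+1$ exploration rounds (with $K^\mu_l<2\e_{l-1}/\e_l=2^{2^{l-1}+1}$, the bound already derived in Lemma~\ref{lemma_divPRRFES_indreg_uppbound}, and $K^\mu_0\le1$), $r-1$ reinforced penalization rounds, and $g(l)=2^{2^l}$ exploitation rounds. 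The decisive observation is that the stopping rule fires as soon as $m$ (hence also $\hat m$, whose phase index stays within one of $m$'s because phase lengths are dominated by the common term $g(l)$) has merely \emph{explored} its critical phase $l^\mu_{I^m}$ — at that instant the localization interval of $m$ becomes tight and separates from that of $\hat m$ — so the $2^{2^{l^\mu_{I^m}}}$ exploitation rounds of that last phase are not incurred; what remains are the exploitation rounds of the strictly earlier phases, $\sum_{l<l^\mu_{I^m}}2^{2^l}\le 2\cdot 2^{2^{l^\mu_{I^m}-1}}$, the exploration rounds $\sum_l K^\mu_l$ (of the same order, via $\sum_{l}K^\mu_l\le 1+2\sum_{l<l^\mu_{I^m}}2^{2^{l-1}}$), and at most $r\,(l^\mu_{I^m}+1)$ penalization rounds. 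Now $l^\mu_{I^m}-1<\log_2\log_2\frac{4}{\bv-v^m}$ gives $2^{2^{l^\mu_{I^m}-1}}<\frac{4}{\bv-v^m}$, and $l^\mu_{I^m}+1\le L^*$, so these three contributions sum to at most $\frac{24}{\bv-v^m}+r\bigl(1+\log_2\log_2\frac{4}{\bv-v^m}\bigr)$; the second inequality in the statement follows from $1+\log_2\log_2\frac{4}{x}<\frac{5}{x}$ for all $x\in(0,1]$ (a one-variable estimate, noting $0<\bv-v^m\le 1$). The main obstacle is exactly this last paragraph: the doubly-exponential bookkeeping must come out linear rather than quadratic in $\frac{1}{\bv-v^m}$, which hinges on (i) the $2^{2^l}$ exploitation rounds of the critical phase being excluded — i.e. on the precise way the phase and last-accepted price of a tracking node are read off once its exploration is finished — and (ii) careful treatment of the integer rounding of $l^\mu_{I^m}$ so the extra square does not reappear.
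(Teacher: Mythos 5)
Your overall route is the same as the paper's: show that while $m$ is still suspected the localization intervals of $m$ and of a maximal-valuation bidder $\bm$ must overlap, convert that into a bound on $\min\{l^m_{I^m},l^{\bm}_{I^m}\}$, and then count exploration, penalization and exploitation rounds phase by phase. Your localization step is careful (intervals of length $2\e_{l-1}$, hence $\bv-v^m<4\e_{\ell_i-1}$ and $\ell_i<1+\log_2\log_2\frac{4}{\bv-v^m}$), and you correctly identify that with this bound the naive count is quadratic in $(\bv-v^m)^{-1}$. The gap is the device you use to escape the quadratic: the assertion that the stopping rule fires as soon as the critical phase has merely been \emph{explored}, so that its $2^{2^{\ell}}$ exploitation rounds are never incurred. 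This contradicts the definition of $\sr_{\A_1}$ in Eq.~(\ref{eq_def_sr_of_PRRFES}): the rule is evaluated on $(l(\n^m),q(\n^m))$, i.e.\ the \emph{current phase index} and the \emph{last accepted price before that phase}, and both are updated only when a phase is completed --- after its reject-penalization and exploitation rounds. During the exploitation rounds of phase $\ell$ the rule still sees the interval $[q^m_{\ell},q^m_{\ell}+2\e_{\ell-1}]$ rather than the tightened $[q^m_{\ell+1},q^m_{\ell+1}+2\e_{\ell}]$, so it cannot fire on the strength of the phase-$\ell$ exploration alone (the appendix even lists ``update $u^m_i$ after each accepted exploration price'' as a possible \emph{improvement} of the rule, confirming the implemented rule does not do this). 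Consequently the $g(\ell)=2^{2^{\ell}}$ exploitation rounds of the critical phase belong to $I^m$, and with only $\ell<1+\log_2\log_2\frac{4}{\bv-v^m}$ this term can be as large as $\big(4/(\bv-v^m)\big)^2$, so your argument does not deliver Eq.~(\ref{lemma_divPRRFES_devreg_uppbound_eq1}). There are also smaller slips: $l^\mu_{I^m}+1\le L^*$ does not follow from $l^\mu_{I^m}<L^*$ for non-integer $L^*$, and the claim that $\hat m$'s phase ``stays within one'' of $m$'s is neither justified nor needed --- the paper simply takes the better of the two bounds obtained with $m'=m$ and $m'=\bm$.

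For comparison, the paper sidesteps the issue by asserting the phase bound \emph{without} your $+1$: it argues the rule must have fired once $\e_{l^m_i}<\frac{\bv-v^m}{4}$ and $\e_{l^{\bm}_i}<\frac{\bv-v^m}{4}$ (indices $l$, not $l-1$), giving $\min\{l^m_{I^m},l^{\bm}_{I^m}\}\le\log_2\log_2\frac{4}{\bv-v^m}$ and hence $2^{2^{\min}}\le\frac{4}{\bv-v^m}$; with that, even including the critical phase's exploitation rounds the total is $6\cdot 2^{2^{\min}}+r(\min+1)$, which matches Eq.~(\ref{lemma_divPRRFES_devreg_uppbound_eq1}). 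Your more conservative derivation with $\e_{l-1}$ (which is the length actually appearing in the localization intervals and in the stopping rule) is exactly where the linear-versus-quadratic question lives; to complete a proof along your lines you would need either to justify the paper's tighter phase bound or to find a legitimate reason why the last phase's exploitation rounds are not charged --- the reason you give is not one.
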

\begin{proof}[Proof sketch]
	Let $\bm$ be a buyer with the maximal valuation $\bv$. Note that, in any period $j=1,..,I^m$, the location intervals  $[q_{l_j^m}^m, q_{l_j^m}^m + 2\e_{l_j^m-1}]$ and $[q_{l_j^{\bm}}^{\bm}, q_{l_j^{\bm}}^{\bm} + 2\e_{l_j^{\bm}-1}]$ must intersect (otherwise, the stopping rule $\sr_{\A_1}$ has eliminated the buyer $m$ before the period $j$, and, hence, $j>I^m$). 
	In particular, in the period $I^m$,  $\e_{L(m',m)}\ge(\bv - v^m)/{4}$ holds for either  $m'=m$ or (not exclusively) $m'=\bm$, where $L(m',m):=l^{m'}_{I^m}$.
	From the definition of the iteration parameter $\e_l$, i.e. $\log_2\e_l = -2^{l}$, one can obtain the bound on one of the phases:
	$\min\{L(m,m), L(\bm,m)\}\le\log_2\log_2(4/(\bv - v^m))$.
	To bound the subhorizon $I^m$, decompose it into the numbers of  exploration,  reject-penalization, and exploitation rounds in each phase $l=0,\ldots, L(m',m)$ played by a buyer $m'\!\in\!\{m,\bm\}$. 
	Applying techniques similar to the ones used in the proof of Lemma~\ref{lemma_divPRRFES_indreg_uppbound} (in particular, the bound on the number of exploration rounds: $K^{m'}_l \!\le \!2 \cdot 2^{2^{l-1}}$), we get:
	$I^m  \!\le\! (L(m',m)\!+\!1) r \!+\!  2\cdot 3 \cdot 2^{2^{L(m',m)}}$
for $m'\!\!\in\!\!\{m,\bm\}$.
This combined with the previous inequality implies Eq.~(\ref{lemma_divPRRFES_devreg_uppbound_eq1}).
	The full proof is in  App.~A.2.2.
\end{proof}
\vspace{-0.2cm}
This lemma implies the upper bound for  the deviation part of the strategic regret  of the divPRRFES pricing algorithm $\A=\diva_M(\langle\A_1\rangle,\sr_{\A_1})$ against the strategic buyers $\Bset$:
$
\SReg^{\dev}(T,\A,\vb,\gb,\betab) \!=\!\! \sum_{m=1}^M I^m (\bv - v^m) \!\le\!  (24+5r)(M-1).
$
Let us denote by $\bBset \!:=\! \{m\!\in\!\Bset \mid v^m \!= \!\bv\}$
the set of bidders with the maximal valuation and by $\bbv:=\max_{m\in\Bset\setminus \bBset}v^m$ the highest valuation among non-maximal ones.
Thus, we showed that learning of the max-valuation bidders $\bBset$ converges with the rate inversely proportional to $\bv-\bbv$ (i.e., after the period $\lceil(24+5r)/(\bv-\bbv)\rceil$ the set of suspected bidders is always $\Sset_i = \bBset$) and this learning contributes a constant (w.r.t.\ the horizon $T$) to the strategic regret.
%
Finally, Lemma~1, 2, and~3 trivially imply (see App.~A.2.3)  the following theorem.
\begin{theorem}
	\label{th_divPRRFES_sreg_uppbound}
	Let $\g_0\in(0,1)$, $\A_1$ be the PRRFES  algorithm with $r \ge r_{\g_0}$ 
	and the exploitation rate $g(l)\! =\! 2^{2^{l}}, l\!\in\!\mathbb{Z}_+$, and $\sr_{\A_1}$ be the stopping rule   defined in Eq.(\ref{eq_def_sr_of_PRRFES}).
	Then, for $T\!\!\ge \!\!2$, the strategic regret of the divPRRFES pricing algorithm $\A\!=\!\diva_M(\langle\A_1\rangle\!,\sr_{\A_1}\!)$ against the buyers $\Bset$ is upper bounded: 
	\begin{equation}
	\label{th_divPRRFES_sreg_uppbound_eq1} 
	\SReg(T,\A,\vb,\!\gb,\!\betab)  \!\le\!  M(r\bv\!+\!4)(\log_2\log_2 T \!+\! 2) +
	 (24\!+\!5r)(M\!-\!1) \:\:\:\:\:\:\:\: \fa\gb\!\!\in\!\!(0,\!\g_0]^M  \fa\vb\!\!\in\!\! [0,\!1]^M \fa \betab.
	\end{equation}
\end{theorem}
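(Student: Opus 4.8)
The plan is to assemble the claimed bound directly from the three preceding lemmas, with no new argument beyond bookkeeping. First I would apply Lemma~\ref{lemma_divalg_reg_decompose} to $\A = \diva_M(\langle\A_1\rangle,\sr_{\A_1}) \in \ASet^{\diva}_M$, whose subalgorithms are all equal to $\langle\A_1\rangle$, obtaining
\[\SReg(T,\A,\vb,\gb,\betab) = \sum_{m\in\Bset}\Reg^m(\II^m,\langle\A_1\rangle,v^m,\ob^m_{1:T}) + \sum_{m\in\Bset} I^m(\bv - v^m).\]
It then suffices to bound the individual and deviation parts separately; note that the hypotheses $r \ge r_{\g_0}$, $g(l) = 2^{2^l}$, $\gb \in (0,\g_0]^M$ are exactly those required by Lemmas~\ref{lemma_divPRRFES_indreg_uppbound} and~\ref{lemma_divPRRFES_devreg_uppbound}.

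For the individual part, I would bound each summand using Lemma~\ref{lemma_divPRRFES_indreg_uppbound}. Since $\Sset_1 = \Bset$, every bidder participates in the first period, so $I^m \ge 1$. When $I^m \ge 2$ the lemma gives $\Reg^m \le (rv^m+4)(\log_2\log_2 I^m + 2) \le (rv^m+4)(\log_2\log_2 T + 2)$, using $I^m \le T$ and monotonicity of $\log_2\log_2$ on $[2,\infty)$. The case $I^m = 1$ is handled directly: $\Reg^m \le v^m \le 1 \le 2(rv^m+4) \le (rv^m+4)(\log_2\log_2 T + 2)$ for $T\ge 2$, since $\log_2\log_2 2 = 0$. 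As $v^m \le \bv$ for every $m$, summing over $\Bset$ yields $\sum_{m\in\Bset}\Reg^m \le M(r\bv+4)(\log_2\log_2 T + 2)$.

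For the deviation part, summands with $v^m = \bv$ vanish, and for each $m$ with $v^m < \bv$ Lemma~\ref{lemma_divPRRFES_devreg_uppbound} gives $I^m(\bv - v^m) < 24+5r$. Since $\bBset \ne \eset$, at most $M-1$ summands are nonzero, so $\sum_{m\in\Bset} I^m(\bv - v^m) \le (24+5r)(M-1)$. Adding the two bounds gives exactly Eq.~(\ref{th_divPRRFES_sreg_uppbound_eq1}).

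There is essentially no obstacle here — all the real work lives in Lemmas~\ref{lemma_divalg_reg_decompose}--\ref{lemma_divPRRFES_devreg_uppbound}. The only points needing a moment's care are the boundary case $I^m = 1$ (where Lemma~\ref{lemma_divPRRFES_indreg_uppbound} is not stated, because $\log_2\log_2 1$ is undefined), the replacement of $I^m$ by $T$ inside the double logarithm via monotonicity, and verifying that bounding $v^m$ by $\bv$ does not inflate the constants; I expect the write-up to take only a few lines, as claimed in App.~A.2.3.
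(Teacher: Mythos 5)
Your proposal is correct and follows essentially the same route as the paper's own proof in App.~A.2.3: decompose via Lemma~\ref{lemma_divalg_reg_decompose}, bound the individual part by Lemma~\ref{lemma_divPRRFES_indreg_uppbound} with $I^m \le T$ and $v^m \le \bv$ (noting separately that the bound trivially holds when $I^m=1$), and bound the deviation part by Lemma~\ref{lemma_divPRRFES_devreg_uppbound} with at most $M-1$ nonzero summands. No gaps.
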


\section{Discussion, extensions of the result, and conclusions}
\label{sec_Disc}

{\bf Other auction formats.}
The techniques and algorithms developed in our work can be applied in repeated auctions where another format of selling a good in a round is used.
Namely, our results hold  in our repeated setting with an auction format (within rounds) that satisfies the following: (a) personal reserve prices are allowed; and (b) if a buyer $m$ is only one non-eliminated participant in a round $t$, then his bidding mechanism allows him to choose between getting the good for the reserve price $p^m_t$ and rejecting it. 
This holds e.g.\ for first(/third/..)-price auctions, for PPA with multiple bidders, etc.

\vspace{-0.1cm}
{\bf Regret dependence on $M$.}
The upper bound of the divPRRFES regret in Eq.~(\ref{th_divPRRFES_sreg_uppbound_eq1}) linearly depends on $M$. We believe  that it is not an artifact of our analysis tools, but a payment for the $\diva$-transformation.
Consider the case in which all bidders have the same valuation, i.e.,  all their valuations are $\bv$. Each bidder will be always suspected by divPRRFES (i.e., be in $\Sset_i \fa i$). Hence, divPRRFES will just learn the valuation $\bv$ for each of $M$ bidders independently and, thus, $M$ times slower; i.e., it is natural that the regret of divPRRFES is $M$ times larger than the regret of PRRFES against a single buyer.
However, there might exist an algorithm that do not suffer from dividing structure  in this way.
So, existence of an algorithm with a more favorable regret dependence on $M$ is an open research question.

\vspace{-0.1cm}
{\bf Lower bound and optimality.}
For the case $M=1$, there does exist the lower bound: the strategic regret of any pricing algorithm is  $\Omega(\log\log T)$~\cite{2014-NIPS-Mohri}. Hence, our upper bound for the algorithm divPRRFES is optimal in the general case of any number of bidders.
Nonetheless, structure of the game with non-single buyer ($M\ge2$) is much more complicated, since a buyer has to act in the presence of rivals and under uncertainty about the future. This is an additional opportunity that can be exploited by a pricing algorithm. Thus, the validity of the lower bound $\Omega(\log\log T)$ for $M\ge 2$  is an open research question.
Several other discussions of the results are also in App.~D.

\section{Conclusions}
\label{sec_Conclusions}
We studied the scenario of repeated second-price auctions with reserve pricing where a seller interacts with multiple strategic buyers. Each buyer participates in each round of the game, holds a fixed private valuation for a good, and seeks to maximize his expected future discounted surplus.
First, we proposed the so-called dividing transformation that upgrades an algorithm designed for the setup with a single buyer to the multi-buyer case.
Second, the transformation allowed us to obtain a novel horizon-independent algorithm that can be applied against strategic buyers  with regret upper bound of $O(\log\log T)$.
Finally, we introduced non-trivial techniques such as 
(a)~the method to locate the valuation of a strategic buyer in a played round under buyer uncertainty about the future; 
(b)~the decomposition of strategic regret into the individual and deviation parts; and
(c)~the approach to learn the highest-valuation bidder with deviation regret of $O(1)$.

\newpage

\nocite{2009-Book-Krishna}
\nocite{2009-EC-Aggarwal}
\nocite{2011-WWW-Dutting}
\nocite{2013-HBS-Ashlagi}
\nocite{2014-KDD-Agarwal}
\nocite{2009-AER-Varian}
\nocite{2011-WWW-Celis}
\nocite{2011-EC-Ostrovsky}
\nocite{2013-EC-Thompson}
\nocite{1981-MOR-Myerson}
\nocite{2014-WWW-Gomes}
\nocite{2015-NIPS-Zoghi}

\nocite{2007-IJIO-Varian}
\nocite{2009-WWW-Aggarwal}
\nocite{2009-SIGIR-Zhu}
\nocite{2016-WSDM-Charles}
\nocite{2007-Book-Nisan}
\nocite{2013-SODA-Cesa-Bianchi}
\nocite{2013-IJCAI-He}
\nocite{2013-NIPS-Amin}
\nocite{2014-WWW-Hummel}
\nocite{2014-ICML-Mohri}
\nocite{2014-KDD-Yuan}
\nocite{2014-ECRA-Sun}
\nocite{2014-NIPS-Mohri}
\nocite{2015-UAI-Mohri}
\nocite{2016-WWW-Paes}
\nocite{2016-WWW-Rudolph}
\nocite{2016-EC-Roughgarden}
\nocite{2017-WWW-Drutsa}

\nocite{2011-ICAAMS-Chhabra}
\nocite{2015-TEC-Babaioff}
\nocite{2014-SSRN-Kanoria}

\nocite{1993-JET-Schmidt}
\nocite{2015-SODA-Devanur}
\nocite{2017-EC-Immorlica}
\nocite{fudenberg2006behavior}
\nocite{2018-WWW-Lahaie}
\nocite{2018-ArXiV-Vanunts}
\nocite{2018-ICML-Drutsa}

\nocite{2012-RIO-Carare}
\nocite{mirrokni2018optimal}
\nocite{2018-IJGT-Hummel}
\nocite{caillaud2004equilibrium}
\nocite{peters2006internet}
\nocite{2011-ECOMexch-Iyer}
\nocite{list1999price}
\nocite{golrezaei2017boosted}
\nocite{agrawal2018robust}
\nocite{cesa2018dynamic}
\nocite{mirrokni2017non}
\nocite{2017-NIPS-Baltaoglu}
\nocite{2018-NIPS-Huang}
\nocite{bikhchandani1988reputation}

\nocite{hartline2009simple}
\nocite{fu2012ad}
\nocite{2015-NIPS-Morgenstern}

\nocite{hart1988contract}
\nocite{fudenberg2006behavior}
\nocite{2018-NIPS-Mao}
\nocite{2018-NIPS-Leme}

\subsubsection*{Acknowledgments}
	I would like to thank  Sergei Izmalkov who inspired me to conduct this study.


\small

\bibliographystyle{abbrv}

\bibliography{2019-arxiv-rspawr}

\begin{thebibliography}{10}

\bibitem{2014-KDD-Agarwal}
D.~Agarwal, S.~Ghosh, K.~Wei, and S.~You.
\newblock Budget pacing for targeted online advertisements at linkedin.
\newblock In {\em KDD'2014}, pages 1613--1619, 2014.

\bibitem{2009-EC-Aggarwal}
G.~Aggarwal, G.~Goel, and A.~Mehta.
\newblock Efficiency of (revenue-) optimal mechanisms.
\newblock In {\em EC'2009}, pages 235--242, 2009.

\bibitem{2009-WWW-Aggarwal}
G.~Aggarwal, S.~Muthukrishnan, D.~P{\'a}l, and M.~P{\'a}l.
\newblock General auction mechanism for search advertising.
\newblock In {\em WWW'2009}, pages 241--250, 2009.

\bibitem{agrawal2018robust}
S.~Agrawal, C.~Daskalakis, V.~Mirrokni, and B.~Sivan.
\newblock Robust repeated auctions under heterogeneous buyer behavior.
\newblock {\em arXiv preprint arXiv:1803.00494}, 2018.

\bibitem{2011-COLT-Amin}
K.~Amin, M.~Kearns, and U.~Syed.
\newblock Bandits, query learning, and the haystack dimension.
\newblock In {\em COLT}, pages 87--106, 2011.

\bibitem{2013-NIPS-Amin}
K.~Amin, A.~Rostamizadeh, and U.~Syed.
\newblock Learning prices for repeated auctions with strategic buyers.
\newblock In {\em NIPS'2013}, pages 1169--1177, 2013.

\bibitem{2014-NIPS-Amin}
K.~Amin, A.~Rostamizadeh, and U.~Syed.
\newblock Repeated contextual auctions with strategic buyers.
\newblock In {\em NIPS'2014}, pages 622--630, 2014.

\bibitem{2016-EC-Ashlagi}
I.~Ashlagi, C.~Daskalakis, and N.~Haghpanah.
\newblock Sequential mechanisms with ex-post participation guarantees.
\newblock In {\em EC'2016}, 2016.

\bibitem{2013-HBS-Ashlagi}
I.~Ashlagi, B.~G. Edelman, and H.~S. Lee.
\newblock Competing ad auctions.
\newblock {\em Harvard Business School NOM Unit Working Paper}, (10-055), 2013.

\bibitem{2015-TEC-Babaioff}
M.~Babaioff, S.~Dughmi, R.~Kleinberg, and A.~Slivkins.
\newblock Dynamic pricing with limited supply.
\newblock {\em ACM Transactions on Economics and Computation}, 3(1):4, 2015.

\bibitem{2016-EC-Balseiro}
S.~Balseiro, O.~Besbes, and G.~Y. Weintraub.
\newblock Dynamic mechanism design with budget constrained buyers under limited
  commitment.
\newblock In {\em EC'2016}, 2016.

\bibitem{2015-ManagSci-Balseiro}
S.~R. Balseiro, O.~Besbes, and G.~Y. Weintraub.
\newblock Repeated auctions with budgets in ad exchanges: Approximations and
  design.
\newblock {\em Management Science}, 61(4):864--884, 2015.

\bibitem{2017-NIPS-Baltaoglu}
M.~S. Baltaoglu, L.~Tong, and Q.~Zhao.
\newblock Online learning of optimal bidding strategy in repeated
  multi-commodity auctions.
\newblock In {\em Advances in Neural Information Processing Systems}, pages
  4507--4517, 2017.

\bibitem{bikhchandani1988reputation}
S.~Bikhchandani.
\newblock Reputation in repeated second-price auctions.
\newblock {\em Journal of Economic Theory}, 46(1):97--119, 1988.

\bibitem{caillaud2004equilibrium}
B.~Caillaud and C.~Mezzetti.
\newblock Equilibrium reserve prices in sequential ascending auctions.
\newblock {\em Journal of Economic Theory}, 117(1):78--95, 2004.

\bibitem{2012-RIO-Carare}
O.~Carare.
\newblock Reserve prices in repeated auctions.
\newblock {\em Review of Industrial Organization}, 40(3):225--247, 2012.

\bibitem{2011-WWW-Celis}
L.~E. Celis, G.~Lewis, M.~M. Mobius, and H.~Nazerzadeh.
\newblock Buy-it-now or take-a-chance: a simple sequential screening mechanism.
\newblock In {\em WWW'2011}, pages 147--156, 2011.

\bibitem{cesa2018dynamic}
N.~Cesa-Bianchi, T.~Cesari, and V.~Perchet.
\newblock Dynamic pricing with finitely many unknown valuations.
\newblock {\em arXiv preprint arXiv:1807.03288}, 2018.

\bibitem{2013-SODA-Cesa-Bianchi}
N.~Cesa-Bianchi, C.~Gentile, and Y.~Mansour.
\newblock Regret minimization for reserve prices in second-price auctions.
\newblock In {\em SODA'2013}, pages 1190--1204, 2013.

\bibitem{2016-WSDM-Charles}
D.~Charles, N.~R. Devanur, and B.~Sivan.
\newblock Multi-score position auctions.
\newblock In {\em WSDM'2016}, pages 417--425, 2016.

\bibitem{2016-SODA-Chawla}
S.~Chawla, N.~R. Devanur, A.~R. Karlin, and B.~Sivan.
\newblock Simple pricing schemes for consumers with evolving values.
\newblock In {\em SODA'2016}, pages 1476--1490, 2016.

\bibitem{2015-EC-Chen}
Y.~Chen and V.~F. Farias.
\newblock Robust dynamic pricing with strategic customers.
\newblock In {\em EC'2015}, pages 777--777, 2015.

\bibitem{2011-ICAAMS-Chhabra}
M.~Chhabra and S.~Das.
\newblock Learning the demand curve in posted-price digital goods auctions.
\newblock In {\em ICAAMS'2011}, pages 63--70, 2011.

\bibitem{2016-EC-Cohen}
M.~C. Cohen, I.~Lobel, and R.~Paes~Leme.
\newblock Feature-based dynamic pricing.
\newblock In {\em EC'2016}, 2016.

\bibitem{2015-SORMS-den-Boer}
A.~V. den Boer.
\newblock Dynamic pricing and learning: historical origins, current research,
  and new directions.
\newblock {\em Surveys in operations research and management science},
  20(1):1--18, 2015.

\bibitem{2015-SODA-Devanur}
N.~R. Devanur, Y.~Peres, and B.~Sivan.
\newblock Perfect bayesian equilibria in repeated sales.
\newblock In {\em SODA'2015}, pages 983--1002, 2015.

\bibitem{2017-WWW-Drutsa}
A.~Drutsa.
\newblock Horizon-independent optimal pricing in repeated auctions with
  truthful and strategic buyers.
\newblock In {\em WWW'2017}, pages 33--42, 2017.

\bibitem{2017-ArXiV-Drutsa}
A.~Drutsa.
\newblock On consistency of optimal pricing algorithms in repeated posted-price
  auctions with strategic buyer.
\newblock {\em CoRR}, abs/1707.05101, 2017.

\bibitem{2018-ICML-Drutsa}
A.~Drutsa.
\newblock Weakly consistent optimal pricing algorithms in repeated posted-price
  auctions with strategic buyer.
\newblock In {\em ICML'2018}, pages 1318--1327, 2018.

\bibitem{2011-WWW-Dutting}
P.~D{\"u}tting, M.~Henzinger, and I.~Weber.
\newblock An expressive mechanism for auctions on the web.
\newblock In {\em WWW'2011}, pages 127--136, 2011.

\bibitem{2016-NIPS-Feldman}
M.~Feldman, T.~Koren, R.~Livni, Y.~Mansour, and A.~Zohar.
\newblock Online pricing with strategic and patient buyers.
\newblock In {\em NIPS'2016}, pages 3864--3872, 2016.

\bibitem{fu2012ad}
H.~Fu, P.~Jordan, M.~Mahdian, U.~Nadav, I.~Talgam-Cohen, and S.~Vassilvitskii.
\newblock Ad auctions with data.
\newblock In {\em Algorithmic Game Theory}, pages 168--179. Springer, 2012.

\bibitem{fudenberg2006behavior}
D.~Fudenberg and J.~M. Villas-Boas.
\newblock Behavior-based price discrimination and customer recognition.
\newblock {\em Handbook on economics and information systems}, 1:377--436,
  2006.

\bibitem{golrezaei2017boosted}
N.~Golrezaei, M.~Lin, V.~Mirrokni, and H.~Nazerzadeh.
\newblock Boosted second-price auctions for heterogeneous bidders.
\newblock 2017.

\bibitem{2014-WWW-Gomes}
R.~Gomes and V.~Mirrokni.
\newblock Optimal revenue-sharing double auctions with applications to ad
  exchanges.
\newblock In {\em WWW'2014}, pages 19--28, 2014.

\bibitem{hart1988contract}
O.~D. Hart and J.~Tirole.
\newblock Contract renegotiation and coasian dynamics.
\newblock {\em The Review of Economic Studies}, 55(4):509--540, 1988.

\bibitem{hartline2009simple}
J.~D. Hartline and T.~Roughgarden.
\newblock Simple versus optimal mechanisms.
\newblock In {\em Proceedings of the 10th ACM conference on Electronic
  commerce}, pages 225--234. ACM, 2009.

\bibitem{2013-IJCAI-He}
D.~He, W.~Chen, L.~Wang, and T.-Y. Liu.
\newblock A game-theoretic machine learning approach for revenue maximization
  in sponsored search.
\newblock In {\em IJCAI'2013}, pages 206--212, 2013.

\bibitem{2016-ICML-Heidari}
H.~Heidari, M.~Mahdian, U.~Syed, S.~Vassilvitskii, and S.~Yazdanbod.
\newblock Pricing a low-regret seller.
\newblock In {\em ICML'2016}, pages 2559--2567, 2016.

\bibitem{2018-NIPS-Huang}
Z.~Huang, J.~Liu, and X.~Wang.
\newblock Learning optimal reserve price against non-myopic bidders.
\newblock In {\em Advances in Neural Information Processing Systems}, pages
  2042--2052, 2018.

\bibitem{2018-IJGT-Hummel}
P.~Hummel.
\newblock Reserve prices in repeated auctions.
\newblock {\em International Journal of Game Theory}, 47(1):273--299, 2018.

\bibitem{2014-WWW-Hummel}
P.~Hummel and P.~McAfee.
\newblock Machine learning in an auction environment.
\newblock In {\em WWW'2014}, pages 7--18, 2014.

\bibitem{2017-EC-Immorlica}
N.~Immorlica, B.~Lucier, E.~Pountourakis, and S.~Taggart.
\newblock Repeated sales with multiple strategic buyers.
\newblock In {\em EC'2017}, pages 167--168, 2017.

\bibitem{2011-ECOMexch-Iyer}
K.~Iyer, R.~Johari, and M.~Sundararajan.
\newblock Mean field equilibria of dynamic auctions with learning.
\newblock {\em ACM SIGecom Exchanges}, 10(3):10--14, 2011.

\bibitem{2014-SSRN-Kanoria}
Y.~Kanoria and H.~Nazerzadeh.
\newblock Dynamic reserve prices for repeated auctions: Learning from bids.
\newblock 2014.

\bibitem{2003-FOCS-Kleinberg}
R.~Kleinberg and T.~Leighton.
\newblock The value of knowing a demand curve: Bounds on regret for online
  posted-price auctions.
\newblock In {\em Foundations of Computer Science}, pages 594--605, 2003.

\bibitem{2009-Book-Krishna}
V.~Krishna.
\newblock {\em Auction theory}.
\newblock Academic press, 2009.

\bibitem{2018-WWW-Lahaie}
S.~Lahaie, A.~M. Medina, B.~Sivan, and S.~Vassilvitskii.
\newblock Testing incentive compatibility in display ad auctions.
\newblock In {\em WWW'2018}, 2018.

\bibitem{2018-NIPS-Leme}
R.~P. Leme and J.~Schneider.
\newblock Contextual search via intrinsic volumes.
\newblock {\em arXiv preprint arXiv:1804.03195}, 2018.

\bibitem{leme2012sequential}
R.~P. Leme, V.~Syrgkanis, and {\'E}.~Tardos.
\newblock Sequential auctions and externalities.
\newblock In {\em SODA'2012}, pages 869--886. SIAM, 2012.

\bibitem{2015-NIPS-Lin}
T.~Lin, J.~Li, and W.~Chen.
\newblock Stochastic online greedy learning with semi-bandit feedbacks.
\newblock In {\em NIPS'2015}, pages 352--360, 2015.

\bibitem{list1999price}
J.~A. List and J.~F. Shogren.
\newblock Price information and bidding behavior in repeated second-price
  auctions.
\newblock {\em American Journal of Agricultural Economics}, 81(4):942--949,
  1999.

\bibitem{2018-NIPS-Mao}
J.~Mao, R.~Leme, and J.~Schneider.
\newblock Contextual pricing for lipschitz buyers.
\newblock In {\em Advances in Neural Information Processing Systems}, pages
  5648--5656, 2018.

\bibitem{2017-NIPS-Medina}
A.~M. Medina and S.~Vassilvitskii.
\newblock Revenue optimization with approximate bid predictions.
\newblock In {\em NIPS'2017}, pages 1856--1864, 2017.

\bibitem{mirrokni2017non}
V.~Mirrokni, R.~Paes~Leme, P.~Tang, and S.~Zuo.
\newblock Non-clairvoyant dynamic mechanism design.
\newblock 2017.

\bibitem{mirrokni2018optimal}
V.~Mirrokni, R.~Paes~Leme, P.~Tang, and S.~Zuo.
\newblock Optimal dynamic auctions are virtual welfare maximizers.
\newblock {\em Available at SSRN}, 2018.

\bibitem{2014-ICML-Mohri}
M.~Mohri and A.~M. Medina.
\newblock Learning theory and algorithms for revenue optimization in second
  price auctions with reserve.
\newblock In {\em ICML'2014}, pages 262--270, 2014.

\bibitem{2015-UAI-Mohri}
M.~Mohri and A.~M. Medina.
\newblock Non-parametric revenue optimization for generalized second price
  auctions.
\newblock In {\em UAI'2015}, 2015.

\bibitem{2014-NIPS-Mohri}
M.~Mohri and A.~Munoz.
\newblock Optimal regret minimization in posted-price auctions with strategic
  buyers.
\newblock In {\em NIPS'2014}, pages 1871--1879, 2014.

\bibitem{2015-NIPS-Morgenstern}
J.~H. Morgenstern and T.~Roughgarden.
\newblock On the pseudo-dimension of nearly optimal auctions.
\newblock In {\em NIPS'2015}, pages 136--144, 2015.

\bibitem{1981-MOR-Myerson}
R.~B. Myerson.
\newblock Optimal auction design.
\newblock {\em Mathematics of operations research}, 6(1):58--73, 1981.

\bibitem{2007-Book-Nisan}
N.~Nisan, T.~Roughgarden, E.~Tardos, and V.~V. Vazirani.
\newblock {\em Algorithmic game theory}, volume~1.
\newblock v.1 CUPC, 2007.

\bibitem{2011-EC-Ostrovsky}
M.~Ostrovsky and M.~Schwarz.
\newblock Reserve prices in internet advertising auctions: A field experiment.
\newblock In {\em EC'2011}, pages 59--60, 2011.

\bibitem{2016-WWW-Paes}
R.~Paes~Leme, M.~P{\'a}l, and S.~Vassilvitskii.
\newblock A field guide to personalized reserve prices.
\newblock In {\em WWW'2016}, 2016.

\bibitem{peters2006internet}
M.~Peters and S.~Severinov.
\newblock Internet auctions with many traders.
\newblock {\em Journal of Economic Theory}, 130(1):220--245, 2006.

\bibitem{2016-EC-Roughgarden}
T.~Roughgarden and J.~R. Wang.
\newblock Minimizing regret with multiple reserves.
\newblock In {\em EC'2016}, pages 601--616, 2016.

\bibitem{2016-WWW-Rudolph}
M.~R. Rudolph, J.~G. Ellis, and D.~M. Blei.
\newblock Objective variables for probabilistic revenue maximization in
  second-price auctions with reserve.
\newblock In {\em WWW'2016}, 2016.

\bibitem{1993-JET-Schmidt}
K.~M. Schmidt.
\newblock Commitment through incomplete information in a simple repeated
  bargaining game.
\newblock {\em Journal of Economic Theory}, 60(1):114--139, 1993.

\bibitem{2014-ECRA-Sun}
Y.~Sun, Y.~Zhou, and X.~Deng.
\newblock Optimal reserve prices in weighted gsp auctions.
\newblock {\em Electronic Commerce Research and Applications}, 13(3):178--187,
  2014.

\bibitem{2013-EC-Thompson}
D.~R. Thompson and K.~Leyton-Brown.
\newblock Revenue optimization in the generalized second-price auction.
\newblock In {\em EC'2013}, pages 837--852, 2013.

\bibitem{2018-ArXiV-Vanunts}
A.~Vanunts and A.~Drutsa.
\newblock Optimal pricing in repeated posted-price auctions.
\newblock {\em arXiv preprint arXiv:1805.02574}, 2018.

\bibitem{2007-IJIO-Varian}
H.~R. Varian.
\newblock Position auctions.
\newblock {\em international Journal of industrial Organization},
  25(6):1163--1178, 2007.

\bibitem{2009-AER-Varian}
H.~R. Varian.
\newblock Online ad auctions.
\newblock {\em The American Economic Review}, 99(2):430--434, 2009.

\bibitem{2014-AER-Varian}
H.~R. Varian and C.~Harris.
\newblock The vcg auction in theory and practice.
\newblock {\em The A.E.R.}, 104(5):442--445, 2014.

\bibitem{2016-JMLR-Weed}
J.~Weed, V.~Perchet, and P.~Rigollet.
\newblock Online learning in repeated auctions.
\newblock {\em JMLR}, 49:1--31, 2016.

\bibitem{2014-KDD-Yuan}
S.~Yuan, J.~Wang, B.~Chen, P.~Mason, and S.~Seljan.
\newblock An empirical study of reserve price optimisation in real-time
  bidding.
\newblock In {\em KDD'2014}, pages 1897--1906, 2014.

\bibitem{2009-SIGIR-Zhu}
Y.~Zhu, G.~Wang, J.~Yang, D.~Wang, J.~Yan, J.~Hu, and Z.~Chen.
\newblock Optimizing search engine revenue in sponsored search.
\newblock In {\em SIGIR'2009}, pages 588--595, 2009.

\bibitem{2015-NIPS-Zoghi}
M.~Zoghi, Z.~S. Karnin, S.~Whiteson, and M.~De~Rijke.
\newblock Copeland dueling bandits.
\newblock In {\em NIPS'2015}, pages 307--315.

\end{thebibliography}

\appendix
\numberwithin{definition}{section}
\numberwithin{equation}{section}
\numberwithin{algorithm}{section}
\numberwithin{figure}{section}
\numberwithin{table}{section}

\section{Missed proofs}

\subsection{Missed proofs from Section~3}

\subsubsection{Proof of Lemma~1}
\label{subsubapp_proof_lemma_divalg_reg_decompose}

\begin{proof}
	Let  $\II^m = \{t^m_i\}_{i=1}^{I^m}$ be the set of rounds in which the bidder $m$ is not eliminated by a barrage reserve pricing. Therefore, we have decomposition of the sequence of all rounds into the union of these sets: $\{1,\ldots,T\} = \cup_{m\in\Bset}\II^m$. Note that we also have a splitting in periods $\{1,\ldots,T\} = \cup_{i=1}^{I}\TT_i$ and the intersection $\II^m\cap\TT_i = \{t^m_i\}$ for $m\in\Bset$, $i=1,\ldots,I^m$.

	So, formally, we have
	\begin{equation}
	\label{proof_lemma_divalg_reg_decompose_fullproof_eq_0}
	\SReg(T,\A,\vb,\gb,\betab) = \Reg\big(T,\A,\vb,\obb_{1:T}(T,\A,\vb,\gb,\betab)\big) \!=\! \sum_{t=1}^T(\bv - \ba_t\bp_t) \!= \!
	\sum_{m\in\Bset}\sum_{i=1}^{I^m}(\bv - \ba_{t^m_i}\bp_{t^m_i}),
	\end{equation}
	where the two first identities follow from definitions, while the latter one is just a change of the order of summation (since $\{1,\ldots,T\} = \cup_{m\in\Bset}\II^m = \cup_{m\in\Bset}\{t^m_i\}_{i=1}^{I^m}$). 
	The terms in the sum could be decomposed in the following way: $\bv - \ba_{t^m_i}\bp_{t^m_i} = \bv -v^m + v^m-  \ba_{t^m_i}\bp_{t^m_i}$.
	Also note, since, in each round $t^m_i$, the bidders $\Bset^{-m}$ are eliminated by a barrage reserve price, then the allocation indicator $\ba_{t^m_i}$ and the transferred payment $\bp_{t^m_i} $   depend only on the behavior of the bidder $m$ in this round, i.e., $\ba_{t^m_i} =a^m_{t^m_i}$, $\bp_{t^m_i} = \bp^m_{t^m_i}$, and, if $a^m_{t^m_i} = 1$,  $ \bp_{t^m_i} = \bp^m_{t^m_i} = p^m_{t^m_i}$.
	So, we can continue Eq.~(\ref{proof_lemma_divalg_reg_decompose_fullproof_eq_0}):
	\begin{equation}
	\label{proof_lemma_divalg_reg_decompose_fullproof_eq_1}
	\begin{split}
	\SReg(T,\A,\vb,\gb,\betab) & =
	\sum_{m\in\Bset}\sum_{i=1}^{I^m}(\bv -v^m + v^m - \ba_{t^m_i}\bp_{t^m_i}) 
	\\&= \sum_{m=1}^M\sum_{i=1}^{I^m}(\bv -v^m) + \sum_{m=1}^M\sum_{i=1}^{I^m}(v^m - a^m_{t^m_i}p^m_{t^m_i}) 
	\\&= \sum_{m=1}^MI^m(\bv -v^m) + \sum_{m=1}^M\Reg^m(\II^m,\A^m,v^m,\ob^m_{1:T}),
	\\&= \SReg^{\dev}(T,\A,\vb,\gb,\betab)+ \SReg^{\ind}(T,\A,\vb,\gb,\betab).
	\end{split}
	\end{equation}

\end{proof}

\subsubsection{Proof of Proposition~1}
\label{subsubapp_proof_prop_reject_bound_divalg}
\begin{proof}
	Let  $t$ be the round in which the bidder $m$ reaches the node $\n$ and rejects his reserve price $p^{m}_t$, which is equal to $p^{m}_t=\p(\n)$ by the construction of the algorithm $\diva_M(\langle\A_1\rangle,\sr)$.
	Note that, in the round $t$, all other bidders $\Bset^{-m}$ are eliminated by a barrage price and the reserve prices set by the $\diva$-algorithm  $\diva_M(\langle\A_1\rangle,\sr)$ depend only on $\ab_{1:T}$ (because $\A_1\in\ASet^{\RPPA}$ and $\sr:\Bset\times\T(\A_1)^M\to \mathtt{bool}$).
	Therefore, it is easy to see that, for any strategy $\sig$, the expected future surplus $\Sur_{t:T}(\A,\g_m,v^m,h^m_t,\beta^m,\sig)$ of the bidder $m$ as a function of the bid $b^m_t=\sig(h^m_t)$ in the round $t$ depends, in fact, only on the binary decision $a^m_t=\Ind_{\{b^m_t\ge p^m_t\}}$:
	more formally, the expected surplus is constant when the bid $b^m_t$ is changed within $\{b^m_t\ge p^m_t\}$ and is constant when the bid $b^m_t$ is changed within $\{b^m_t < p^m_t\}$.
	Moreover, since the buyers are divided (in the whole game) and $\A_1\in\ASet^{\RPPA}$, if two strategies $\sig'$ and $\sig''\in\StrartSet_T$ do not differ in their binary output, i.e., $\Ind_{\{\sig'(h)\ge p^m_t\}}=\Ind_{\{\sig''(h)\ge p^m_t\}} \fa h\in\HH_{1:T}$, then they have the same future discounted surplus.
	Hence, any strategy can be treated as a map to binary decisions $\{0,1\}$ (instead of $\mathbb{R}_+$).
	Let $\hat{\sig}_a$ denote an optimal strategy among all possible strategies in which the binary decision $a^m_t$ in the round $t$ is $a\in\{0,1\}$, i.e., $\Ind_{\{\hat{\sig}_a(h^m_t)\ge p^m_t\}}=a$ and $\hat{\sig}_a$ maximizes
	$$\Expect[\sum_{s=t}^T\!\g^{s-1}_m\ba^m_s(v^m - \bp^m_s)\mid h^m_t, a^m_t = a, \sig, \beta^m].$$

	Given a strategy $\sig\in\StrartSet_T$, let us denote the future expected surplus when following this strategy by $S^m_t(\sig):=\Sur_{t:T}(\A,\g_m,v^m,h^m_t,\beta^m,\sig)$.
	When the optimal strategy $\osig^m$ (used by the buyer) is {\bf pure}, we directly have
	$S^m_t(\hat{\sig}_1) \le S^m_t(\osig^m) = S^m_t(\hat{\sig}_0)$, since the price $p^{m}_t$ is rejected ($a^m_t=0$) by our strategic buyer.
	In the general case, when the buyer's optimal strategy $\osig^m$ is {\bf mixed}, let $\alpha_0$ be the probability of a reject ($a^m_t=0$) and, thus, $1-\alpha_0$ be the probability of an acceptance ($a^m_t=1$) in this strategy. Since the strategy is optimal, its surplus $S^m_t(\osig^m)=\alpha_0 S^m_t(\hat{\sig}_0)+ (1-\alpha_0)S^m_t(\hat{\sig}_1)$ must  be no lower than the surplus $S^m_t(\hat{\sig}_1)$ of the strategy $\hat{\sig}_1$: $$\alpha_0 S^m_t(\hat{\sig}_0)+ (1-\alpha_0)S^m_t(\hat{\sig}_1) \ge S^m_t(\hat{\sig}_1).$$
	Since the price $p^{m}_t$ is rejected,  the probability $\alpha_0>0$ and, thus, $\alpha_0 S^m_t(\hat{\sig}_0) \ge \alpha_0S^m_t(\hat{\sig}_1)$.
	In any way, we obtain: 
	\begin{equation}
	\label{prop_reject_bound_divalg_proof_eq_0}
	S^m_t(\hat{\sig}_1) \le S^m_t(\hat{\sig}_0).
	\end{equation}
	
	Let us bound each side of this inequality:
	\begin{equation}
	\label{prop_reject_bound_divalg_proof_eq_1}
	\begin{split}
	S^m_t(\hat{\sig}_1) &= \Expect[\sum_{s=t}^T\!\g^{s-1}_m\ba^m_s(v^m - \bp^m_s)\mid h^m_t, a^m_t = 1, \hat{\sig}_1, \beta^m] = \\
	&= \g_m^{t-1}(v^m-\p(\n)) + \Expect[\sum_{s=t+1}^T\!\g^{s-1}_m\ba^m_s(v^m - \bp^m_s)\mid h^m_t, a^m_t = 1, \hat{\sig}_1, \beta^m]  \ge \\
	& \ge\g_m^{t-1}(v^m-\p(\n)),
	\end{split}
	\end{equation}
	where, in the second identity, we used the fact that if the bidder accepts the price $\p(\n)$, then he necessarily gets the good since all other bidders $\Bset^{-m}$ are eliminated by a barrage price in this round $t$ ({\bf it is the key point of the proof!}). In the last inequality, we used that the expected surplus in rounds $s\ge t+1$ is at least non-negative, because the subalgorithm $\A_1\in\mathbf{C_R}$ is right consistent and accepting of the offered price $\p(\m)$ in some reached node $\m\in\T(\A_1)$ s.t. $\p(\m)>v^m$ will thus result in reserve prices for him higher than his valuation $v^m$ in all subsequent rounds as well (so, the buyer has no incentive to get a local negative surplus in a round, because it will result in non-positive surplus in all subsequent rounds).
	
	\begin{equation}
	\label{prop_reject_bound_divalg_proof_eq_2}
	\begin{split}
	S^m_t(\hat{\sig}_0) &= \Expect[\sum_{s=t}^T\!\g^{s-1}_m\ba^m_s(v^m - \bp^m_s)\mid h^m_t, a^m_t = 0, \hat{\sig}_0, \beta^m] = \\
	&= \Expect[\sum_{s=t^m_{i+r}}^T\g^{s-1}_m\ba^m_s(v^m - \bp^m_s)\mid h^m_t, a^m_t=0, \hat{\sig}_0, \beta^m] \le \\
	& \le \sum\limits_{s=t+r}^T\g_m^{s-1}(v^m - \p(\n) + \delta_{\n}^{l})< \frac{\g_m^{t+r-1}}{1-\g_m}(v^m - \p(\n) + \delta_{\n}^{l}),
	\end{split}
	\end{equation}	
	where $i$ is the current period of the $\diva$-algorithm  $\diva_M(\langle\A_1\rangle,\sr)$, i.e., the round $t = t^m_i\in\TT_i$ is such that the buyer $m$ is the non-eliminated participant in this round (see Sec.3). In the second identity, we used the fact that if the bidder rejects the price $p^{m}_t$, then the future rounds $\{t^m_{i+j}\}_{j=1}^{r-1}$ (in which the bidder will be non-eliminated) will be reinforced penalization rounds (and the strategic bidder will reject prices in all of them as well). In the first inequality, we just upper bounded surplus by assuming that only this bidder left among the suspected bidders $\Sset_j, j>i,$ and he receives the lowest possible reserve price from the left subtree $\LL(\n)$ of the node $\n$. The latter inequality is just a simple arithmetic upper bound for the sum of discounts $\sum_{s=t+r}^T\g_m^{s-1}$.
	
	We unite these bounds on $S^m_t(\hat{\sig}_0)$ and $S^m_t(\hat{\sig}_1)$ (i.e., Eq.~(\ref{prop_reject_bound_divalg_proof_eq_0}),~(\ref{prop_reject_bound_divalg_proof_eq_1}),~and~(\ref{prop_reject_bound_divalg_proof_eq_2})), divide by $\g_m^{t-1}$, and get
	\begin{equation}
	\label{prop_reject_bound_divalg_proof_eq_3}
	(v^m - \p(\n))\left( 1 - \frac{\g_m^r}{1-\g_m}\right)    <  \frac{\g_m^{r}}{1-\g_m} \delta_{\n}^{l},
	\end{equation}	
	that implies the inequality claimed by the proposition, since $r > \log_{\g_m}(1-\g_m)$. 
\end{proof}

\subsection{Missed proofs from Section~4}

\subsubsection{Proof of Lemma~2}
\label{subsubapp_proof_lemma_divPRRFES_indreg_uppbound}

\begin{proof}
	The game has been played and $\obb_{1:T}\!=\!\obb_{1:T}(T,\diva_M(\langle\A_1\rangle,\sr),\vb,\gb,\betab)$ are the resulted optimal bids of the strategic buyers $\Bset$.
	So, let $L^m:=l^m_{I^m}$ be the number of phases conducted by the algorithm during the rounds $\II^m=\{t^m_i\}_{i=1}^{I^m}$ against the strategic buyer $m$. 
	Then we decompose the total individual regret over these rounds into the sum of the phases' regrets: $\Reg^m(\II^m,\langle\A_1\rangle,v^m,\ob^m_{1:T}) = \sum_{l=0}^{L^m}R^m_l$. For the regret $R_l$ at each phase except the last one, the following identity holds:
	\begin{equation}
	\label{th_PRRFES_regret_upper_bound_proof_eq1}
	R^m_l =  \sum\limits_{k=1}^{K^m_l} (v^m-p^m_{l,k}) + rv^m + g(l)(v^m-p^m_{l,K^m_{l}}), \quad l=0,\ldots,L^m-1,
	\end{equation} 
	where the first, second, and third terms correspond to the exploration rounds with acceptance, the reject-penalization rounds, and the exploitation rounds\footnote{Note that the prices at the exploitation rounds $p^m_{l, K^m_{l}}$ are equal to either $0$ or an earlier accepted price, and are thus accepted by the strategic buyer (since the buyer's decisions at these rounds do not affect further pricing of the algorithm divPRRFES).}, respectively.
	Since the basis of the subalgorithm PRRFES $\A_1\in\mathbf{C_R}$ is right-consistent~\cite{2017-WWW-Drutsa}, as discussed in the proof of Proposition~1 (see Appendix~\ref{subsubapp_proof_prop_reject_bound_divalg}), the optimal strategy of the bidder $m$ is non-losing~\cite{2017-WWW-Drutsa}:  the buyer has no incentive to get a local negative surplus in a round, because it will result in non-positive surplus in all subsequent rounds.
	
	Hence, since the price $p^m_{l,K^m_{l}}$ is $0$ or has been accepted, we have  $p^m_{l,K^m_{l}} \le v^m$. 
	Second, since the price $p^m_{l,K^m_{l} + 1}$ is rejected, we have $v^m - p^m_{l,K^m_{l} + 1} < (p^m_{l,K^m_{l} + 1} - p^m_{l,K^m_{l}})=\e_l$ (by Proposition~1 since $\zeta_{r,\g_m}\le1$ for $r\ge r_{\g_0}$ and $\g_m\le \g_0$). Hence, the valuation 
	$v^m\in\big[p^m_{l,K^m_{l}}, p^m_{l,K^m_{l}} + 2\e_{l}\big)$ and all accepted prices $p^m_{l+1,k}, \fa k\le K^m_{l+1}$, from the next phase $l+1$ satisfy: 
	$$
	p^m_{l+1,k}\in[q^m_{l+1},v^m)\subseteq \big[p^m_{l,K^m_{l}}, p^m_{l,K^m_{l}} + 2\e_{l}\big) \quad \fa k\le K^m_{l+1},
	$$ 
	because any accepted price has to be lower than the valuation~$v^m$ for the strategic buyer (whose optimal strategy is locally non-losing one, as we stated above). This infers $K^m_{l+1} <2\e_{l}/\e_{l+1} = 2 N_{l+1},$ where $N_l := \e_{l-1}/\e_{l} = \e^{-1}_{l-1} = 2^{2^{l-1}}$. Therefore, for the phases $l=1,\ldots,L^m$, we have:
	\begin{equation*}
	v^m - p^m_{l,K^m_{l}} < 2\e_{l}; \qquad
	v^m-p^m_{l,k} < \e_{l}\big(2N_{l}-k\big)  \fa k\in\mathbb{Z}_{2N_{l}};
	\end{equation*}
	and
	\begin{equation*}
	\begin{split}
	\sum\limits_{k=1}^{K^m_{l}}(v^m-p^m_{l,k}) < \e_{l}\sum\limits_{k=1}^{2N_{l}-1}\big(2N_{l}-k\big) = \e_{l}\frac{2N_{l}-1}{2}\big(2\cdot 2N_l-2N_{l}\big) 
	\le 2N_l\cdot N_l\e_{l} = 2N_l\cdot \e_{l-1} = 2,
	\end{split}
	\end{equation*} 
	where we used the definitions of $N_l$ and $\e_l$.
	For the zeroth phase $l=0$, one has trivial bound 
	$\sum_{k=1}^{K^m_{0}}(v-p^m_{0,k}) \le 1/2$.
	Hence, by definition of the exploitation rate $g(l)$, we have $g(l)=\e_l^{-1}$ and, thus,
	\begin{equation}
	\label{th_PRRFES_regret_upper_bound_proof_eq2} 
	R^m_l \le  2 + rv^m + g(l) \cdot 2\e_{l} \le  rv^m+4, \quad l=0,\ldots,L-1.
	\end{equation} 
	
	Moreover, this inequality holds for the $L^m$-th phase, since it differs from the other ones only in possible absence of some rounds (reject-penalization or exploitation ones). Namely, for the $L^m$-th phase, we have:
	\begin{equation}
	\label{th_PRRFES_regret_upper_bound_proof_eq3} 
	R^m_L =  \sum\limits_{k=1}^{K^m_L} (v^m-p^m_{L^m,k}) + r_{L^m}v^m + g_{L^m}(L^m)(v^m-p^m_{L^m,K^m_{L^m}}),
	\end{equation}
	where $r_{L^m}$ is the actual number of reject-penalization rounds and $g_{L^m}(L^m)$ is the actual number of exploitation ones in the last phase. Since $r_{L^m}\le r$ and $g_{L^m}(L^m)\le g(L^m)$,  
	the right-hand side of Eq.~(\ref{th_PRRFES_regret_upper_bound_proof_eq3}) is upper-bounded by the right-hand side of Eq.~(\ref{th_PRRFES_regret_upper_bound_proof_eq1}) with $l=L^m$, which is in turn upper-bounded by the right-hand side of Eq.~(\ref{th_PRRFES_regret_upper_bound_proof_eq2}).
	Finally, one has
	\begin{equation*}
	\Reg^m(\II^m,\diva_M(\langle\A_1\rangle,\sr),v^m,\ob^m_{1:T}) = \sum_{l=0}^{L^m}R^m_l \le \left(rv^m+4\right)(L^m+1).
	\end{equation*} 
	Thus, one needs only to estimate the number of phases $L^m$ by the subhorizon $I^m$. So, for $2\le I^m\le 2 + r + g(0)$, we have $L^m=0$ or $1$ and thus $L^m+1\le 2 \le \log_2\log_2 I^m + 2$. For $I^m \ge 2 + r + g(0)$, we have $I^m = \sum_{l=0}^{L^m-1}(K^m_l + r + g(l)) + K^m_{L^m} + r_{L^m} + g_{L^m}(L^m)\ge g(L^m-1)$ with $L^m>0$.
	Hence, $g(L^m-1) = 2^{2^{L^m-1}}\le I^m$, which is equivalent to $L^m \le \log_2\log_2 I^m + 1$. Summarizing, we get the claimed upper bound of the lemma.
\end{proof}

\subsubsection{Proof of Lemma~3}
\label{subsubapp_proof_lemma_divPRRFES_devreg_uppbound}

\begin{proof}
	Let $\bm\in\bBset$ be one of the bidders $\bBset=\{m\!\in\!\Bset \mid v^m \!= \!\bv\}$ that have the maximal valuation $\bv$.
	Then, the stopping rule $\sr_{\A_1}$ (which is based on the rule $\rho(m, \lb, \qb):=
	\exists \hat m\in\Bset^{-m}:  q^m + 2\e_{l^m-1}<q^{\hat m} \fa \lb\in\mathbb{Z}^M_+, 
	\fa \qb\in\mathbb{R}^M_+$) is executed no later than the period $i'$ where the upper bound $q_{l_{i'}^m}^m + 2\e_{l_{i'}^m-1}$ of the bidder $m$'s valuation becomes lower than the lower bound $q_{l_{i'}^{\bm}}^{\bm} $ of the bidder $\bm$'s valuation\footnote{Note that it is correct to consider $l_{i}^m$ in any period $i$ even though the buyer $m$ is not suspected in this period, i.e., $m\notin\Sset_i$. This is because the algorithm stops change the tracking node $\n_i^m$ in the subalgorithm tree $\T(\langle\A_1\rangle)$ after the period $I^m$, but  $l_{i}^m$ just remains the same in all subsequent periods, i.e., we formally set $l_{i}^m=l_{I^m}^m$ for all $i>I^m$.}.
	
	Moreover, since $v^m\in [q_{l_j^m}^m, q_{l_j^m}^m + 2\e_{l_j^m-1}]$ and $v^{\bm}\in [q_{l_j^{\bm}}^{\bm}, q_{l_j^{\bm}}^{\bm} + 2\e_{l_j^{\bm}-1}]$ for any period $j$, the stopping rule is executed no later than the period $i$ where 
	both the phase iteration parameter $\e_{l^m_i}$ of the bidder $m$ and the phase iteration parameter $\e_{l^{\bm}_i}$ of the bidder $\bm$ become smaller than one quarter of the difference between the valuations of these bidders, i.e.,  $\e_{l^m_i}$ and $\e_{l^{\bm}_i}<\frac{\bv - v^m}{4}$ (because, in this case, the segments $[q_{l_i^m}^m, q_{l_i^m}^m + 2\e_{l_i^m-1}]$ and $ [q_{l_i^{\bm}}^{\bm}, q_{l_i^{\bm}}^{\bm} + 2\e_{l_i^{\bm}-1}]$ do not intersect at all, what implies $q_{l_i^m}^m + 2\e_{l_i^m-1}< q_{l_i^{\bm}}^{\bm}$).
	
	Therefore, in the periods $i \le I^m$, it is not possible to have simultaneously $\e_{l^m_{i}}<\frac{\bv - v^m}{4}$ and $\e_{l^{\bm}_{i}}<\frac{\bv - v^m}{4}$. So, in the period $i=I^m$, either $\e_{l^m_{I^m}}\ge\frac{\bv - v^m}{4}$, or (not exclusively) $\e_{l^{\bm}_{I^m}}\ge\frac{\bv - v^m}{4}$ holds.
	In particular, from the definition of the phase iteration parameter $\e_l=2^{-2^{l}}$, we have: if $\e_{l}  \ge \delta$ for some $l\in\mathbb{Z}_+$ and $\delta\in(0,1/2)$, then 
	$$
	\e_{l} = 2^{-2^{l}}  \ge \delta
	\quad\Leftrightarrow\quad -2^{l} \ge \log_2\delta
	\quad\Leftrightarrow\quad 2^{l} \le \log_2\frac{1}{\delta}
	\quad\Leftrightarrow\quad l \le \log_2\log_2\frac{1}{\delta}.
	$$
	Hence, in the period $I^m$, the following holds:
	\begin{equation*}
	l^m_{I^m}\le\log_2\log_2\frac{4}{\bv - v^m} \quad \hbox{or (not exclusively)} \quad l^{\bm}_{I^m}\le\log_2\log_2\frac{4}{\bv - v^m},
	\end{equation*}
	and,  thus,
	\begin{equation}
	\label{proof_lemma_divPRRFES_devreg_uppbound_fullproof_eq_0}
	\min\{l^m_{I^m}, l^{\bm}_{I^m}\}\le\log_2\log_2\frac{4}{\bv - v^m}.
	\end{equation}
	
	Finally, we bound $I^m$. Let, $L^{m';m}:=l^{m'}_{I^m}$ be the phase of a buyer $m' \in \{m, \bm\}$ in the period $I^m$. As in the proof of Lemma~2 (see Appendix~\ref{subsubapp_proof_lemma_divPRRFES_indreg_uppbound}) we decompose $I^m$  into the numbers of exploration,  reject-penalization, and exploitation rounds in each phase $l=0,\ldots, L^{m';m}$ passed by the buyer $m'$. Namely,
	\begin{equation}
	\label{proof_lemma_divPRRFES_devreg_uppbound_fullproof_eq_1}
	I^m = \sum_{l=0}^{L^{m';m} - 1 } (K^{m'}_l + r + g(l)) + K^{m'}_{L^{m';m}} + r^{m'}_{L^{m';m}} + g^{m'}_{L^{m';m}}, 
	\end{equation}
	where $r^{m'}_l$ and $g^{m'}_l$ are the numbers of  penalization rounds and exploitation rounds, resp., passed by the buyer $m'$ in the last phase $l=L^{m';m}$ before reaching the period $I^m$. Let us trivially bound $r^{m'}_{L^{m';m}}\le r$ and $g^{m'}_{L^{m';m}} \le g(L^{m';m})$.
	We also know that, for any $l\in\mathbb{Z}_+$, $K^{m'}_l \le 2 \cdot 2^{2^{l-1}}$ (see the proof of Lemma~2 in Appendix~\ref{subsubapp_proof_lemma_divPRRFES_indreg_uppbound}). Therefore, Eq.~\ref{proof_lemma_divPRRFES_devreg_uppbound_fullproof_eq_1} implies
	\begin{equation}
	\label{proof_lemma_divPRRFES_devreg_uppbound_fullproof_eq_2}
	I^m \le \sum_{l=0}^{L^{m';m}  } (2\cdot 2^{2^{l-1}}+ r + 2^{2^{l}}) \le \sum_{l=0}^{L^{m';m}  } (3\cdot 2^{2^{l}}+ r) \le (L^{m';m}+1) r +  2\cdot 3 \cdot 2^{2^{L^{m';m}}}, 
	\end{equation}
	
	Taking $m' = m$ and $m' = \bm$,  we get the following from Eq.~(\ref{proof_lemma_divPRRFES_devreg_uppbound_fullproof_eq_2}):
	\begin{equation}
	\label{proof_lemma_divPRRFES_devreg_uppbound_fullproof_eq_3}
	I^m \le (\min\{l^{m}_{I^m},l^{\bm}_{I^m}\}+1) r +  6 \cdot 2^{2^{\min\{l^{m}_{I^m},l^{\bm}_{I^m}\}}} 
	\le r(\log_2\log_2\frac{4}{\bv - v^m} + 1) + 6 \cdot \frac{4}{\bv - v^m} , 
	\end{equation}
	where we used the definition of $L^{m';m}:=l^{m'}_{I^m}$ and the upper bound for the phases $l^{m}_{I^m}$ and $l^{\bm}_{I^m}$ in Eq.~(\ref{proof_lemma_divPRRFES_devreg_uppbound_fullproof_eq_0}).
	So, Eq.~(\ref{proof_lemma_divPRRFES_devreg_uppbound_fullproof_eq_3}) implies the claim of the lemma.

\end{proof}

\subsubsection{Proof of Theorem~1}
\label{subsubapp_proof_th_divPRRFES_sreg_uppbound}

\begin{proof}
	From Lemma~1, we have:
	\begin{equation}
	\label{proof_th_divPRRFES_sreg_uppbound_fullproof_eq_1}
	\SReg(T,\A,\vb,\gb,\betab) = \sum_{m=1}^{M} \Reg^m(\II^m,\A^m,v^m,\ob^m_{1:T}) + \sum_{m=1}^{M} I^m (\bv - v^m).
	\end{equation}
	
	From Lemma~2, if $I^m \ge 2$, one can upper bound the first term in right-hand side of Eq.~(\ref{proof_th_divPRRFES_sreg_uppbound_fullproof_eq_1}) since $\A^m=\langle\A_1\rangle$:
	\begin{equation}
	\label{proof_th_divPRRFES_sreg_uppbound_fullproof_eq_2}
	\Reg^m(\II^m,\A^m,v^m,\ob^m_{1:T}) \le (rv^m+4)(\log_2\log_2 I^m + 2) \le (r\bv+4)(\log_2\log_2 T + 2),
	\end{equation}
	where we bounded the subhorizon $I^m$ of each bidder $m\in\Bset$ by the time horizon $T$ (i.e., $I^m \le T$) and the valuation $v^m$ of each bidder $m\in\Bset$ by the maximal valuation (i.e., $v^m\le \bv$). Note that the latter bound of Eq.~(\ref{proof_th_divPRRFES_sreg_uppbound_fullproof_eq_2}) holds for $\Reg^m(\II^m,\A^m,v^m,\ob^m_{1:T})$ in the case of $I^m = 1$ as well (this case has not been provided by Lemma~2).
	
	From Lemma~3, one can upper bound the second term in right-hand side of Eq.~(\ref{proof_th_divPRRFES_sreg_uppbound_fullproof_eq_1}):
	\begin{equation}
	\label{proof_th_divPRRFES_sreg_uppbound_fullproof_eq_3}
	\sum_{m=1}^{M} I^m (\bv - v^m) \le \sum_{\{m\in\Bset \mid v^m\neq \bv\}}\frac{24+5r}{\bv - v^m} (\bv - v^m) \le  (24+5r)(M-1),
	\end{equation}
	where we used that at least one bidder $\bm\in\Bset$ has $v^{\bm}=\bv$ and, hence, $|\{m\in\Bset \mid v^m\neq \bv\}|\le M-1$.
	
	Thus, plugging Eq.~(\ref{proof_th_divPRRFES_sreg_uppbound_fullproof_eq_2}) and Eq.~(\ref{proof_th_divPRRFES_sreg_uppbound_fullproof_eq_3}) into Eq.~(\ref{proof_th_divPRRFES_sreg_uppbound_fullproof_eq_1}), we obtain the claimed bound for the strategic regret of divPRRFES.
	
\end{proof}


\section{The pseudo-codes}

\subsection{The pseudo-code of $\diva$-transformation}

\begin{algorithm}
	\small
	\caption{Pseudo-code of a $\diva$-transformation $\diva_M(\A_1,\sr)$ of a RPPA algorithm $\A_1\in\ASet^{\RPPA}$.}
	\label{alg_div_transform}
	\begin{algorithmic}[1]
		\STATE {{\bfseries Input:} $M\in\mathbb{N}$,  $\A_1\in\ASet^{\RPPA},$ $\sr:\Bset\times\T(\A_1)^M\to \mathtt{bool}$}
		\STATE {{\bfseries Initialize:} $t:=1$, $\Sset :=\Bset$, $\n[\:] := \{\treeroot(\T(\A_1))\}_{m=1}^M$}
		\WHILE{$t\le T$}
		\FORALL{$m\in\Sset$}		    
		\STATE {Set the price $\p(\n[m])$ as reserve to the buyer $m$}
		\STATE {Set the price $p^{\mathrm{bar}}$  as reserve to the buyers from $\Bset^{-m}$}
		\STATE {$\bb[\:] \gets $ get bids from the buyers $\Bset$}
		\IF{$\bb[m]\ge \p(\n[m])$}
		\STATE {Allocate $t$-th good to the buyer $m$ for the price $\p(\n[m])$}
		\STATE {$\n[m] := \rt(\n[m])$}
		\ELSE
		\STATE {$\n[m] := \lf(\n[m])$}
		\ENDIF
		\STATE {$t := t+1$}
		\IF{ $t> T$}
		\STATE {{\bf break} } 
		\ENDIF
		\ENDFOR
		\STATE {$\Sset^{\mathrm{old}}:=\Sset$}
		\FORALL{$m\in\Sset^{\mathrm{old}}$}		    
		\IF{  $\sr(m,\n[\:])$}
		\STATE {$\Sset:=\Sset\setminus\{m\}$}
		\ENDIF
		\ENDFOR
		\ENDWHILE	
	\end{algorithmic}
\end{algorithm}

\newpage

\subsection{The pseudo-code of divPRRFES}
\begin{algorithm}
	\small
	\caption{Pseudo-code of the algorithm divPRRFES.}
	\label{alg_divPRRFES}
	\begin{algorithmic}[1]
		\STATE {{\bfseries Input:} $M\in\mathbb{N}$,  $r\in\mathbb{N}$, and $g:\mathbb{Z}_{+}\rightarrow\mathbb{Z}_{+}$}
		\STATE {{\bfseries Initialize:} $t:=1$, $\Sset :=\Bset$, $q[\:] := \{0\}_{m=1}^M$, $l[\:] := \{0\}_{m=1}^M$, $x[\:] := \{0\}_{m=1}^M$, $\mathrm{state}[\:] := \{\mathtt{"explore"}\}_{m=1}^M$ }
		\WHILE{$t\le T$}
		\FORALL{$m\in\Sset$}		    
		\IF{$\mathrm{state}[m]=\mathtt{"penalize"}$}
		\STATE {$p:= 1$ {\color{gray} // a reinforced penalization round for the buyer $m$ }}
		\STATE {$x[m]:= x[m]-1$}
		\ENDIF
		\IF{$\mathrm{state}[m]=\mathtt{"explore"}$}
		\STATE {$p:= q[m] + 2^{-2^{l[m]}}$ {\color{gray} // an exploration round for the buyer $m$}}
		\ELSE
		\STATE {$p:= q[m]$ {\color{gray} // an exploitation round for the buyer $m$}}
		\STATE {$x[m]:= x[m]-1$}
		\ENDIF
		\STATE {Set the price $p$ as reserve to the buyer $m$}
		\STATE {Set the price $p^{\mathrm{bar}}$  as reserve to the buyers from $\Bset^{-m}$}
		\STATE {$\bb[\:] \gets $ get bids from the buyers $\Bset$}
		\IF{$\bb[m]\ge p$}
		\STATE {Allocate $t$-th good to the buyer $m$ for the price $p$}
		\STATE {$q[m]:=p$}
		\IF{$\mathrm{state}[m]=\mathtt{"penalize"}$}
		\STATE {$x[m]:=-1$  {\color{gray} // a reinforced penalization price is accepted; set $1$ to the buyer $m$ all his rounds }}
		\ENDIF			
		\ELSE
		\IF{$\mathrm{state}[m]=\mathtt{"explore"}$}
		\STATE {$\mathrm{state}[m]:=\mathtt{"penalize"}$}
		\STATE {$x[m]:= r$ {\color{gray} // an exploration price is rejected; move the buyer $m$ to penalization}}
		\ENDIF
		\ENDIF
		\IF{$\mathrm{state}[m]=\mathtt{"penalize"}$ \AND $x[m]=0$}
		\STATE {$\mathrm{state}[m]:=\mathtt{"exploit"}$}
		\STATE {$x[m]:= g(l[m])$ {\color{gray} // penalization rounds are ended; move the buyer $m$ to exploitation}}
		\ENDIF
		\IF{$\mathrm{state}[m]=\mathtt{"exploit"}$ \AND $x[m]=0$}
		\STATE {$\mathrm{state}[m]:=\mathtt{"explore"}$}
		\STATE {$l[m]:= l[m]+1$ {\color{gray} // exploitation rounds are ended; move the buyer $m$ to the next phase }}
		\ENDIF
		\STATE {$t := t+1$}
		\IF{ $t> T$}
		\STATE {{\bf break} } 
		\ENDIF
		\ENDFOR
		\STATE {$\Sset^{\mathrm{old}}:=\Sset$}
		\STATE {$q_{\max}:=\max_{m\in\Bset}(q[m])$}
		\FORALL{$m\in\Sset^{\mathrm{old}}$}		    
		\IF{  $q[m] + 2*2^{-2^{l[m]-1}} < q_{\max}$}
		\STATE {$\Sset:=\Sset\setminus\{m\}$ {\color{gray} // remove the buyer $m$ from suspected ones if the stopping rule is satisfied}}
		\ENDIF
		\ENDFOR
		\ENDWHILE	
	\end{algorithmic}
\end{algorithm}

\newpage

\section{Summary on used notations}

Note that we use several mnemonic notations: 
\begin{itemize}
	\item upper index for a value of a particular buyer (e.g., $v^m$, $a^m_t$, $p^m_t$, etc.); 
	\item boldface for a vector of values for all bidders (e.g., $\vb$, $\ab_t$, $\pb_t$, etc.); 
	\item bar (overline) for terms associated with  best value / winning (e.g., the winner $\bm_t$, the highest valuation $\bv$, etc.); etc. 
\end{itemize}

The full list of used notations is summarized below in the following tables.

\subsection{General notations}

See Tables~\ref{tbl_notations_general_p1}, ~\ref{tbl_notations_general_p2}, and~\ref{tbl_notations_general_p3}.
\begin{table}
	\centering
	\caption{General notations: part I.}
	\label{tbl_notations_general_p1}
	\begin{tabular}{|c|l|p{10cm}|}
		\hline
		\textbf{Notation} & \textbf{Expression} & \textbf{Description} \\
		\hline
		\hline

		$\Expect[\cdot]$ & & expectation		\\
		
		\hline

		$\Ind_{B}$ & &   the indicator: $\Ind_{B} = 1$, when $B$ holds, and $0$, otherwise.		\\
		
		\hline
		
		$T$ & & the [time] horizon, the number of rounds in the repeated game		\\
		\hline
		$t$ & & a round in the repeated game, $t\in\{1,\ldots,T\}$		\\

		\hline
		$v^m$ & & the valuation of a buyer $m$	\\
		
		\hline
		$\bv$ & $=\max_{m\in\Bset}v^m$ &  the highest valuation among the buyers		\\
		\hline
		$\bbv$ & $=\max_{m\in\Bset\setminus \bBset}v^m$ & 	the maximal valuation among non-highest valuations ot the buyers (if exists)	\\
		
		\hline
		$\bm$ & & a buyer that has the highest valuation $\bv$		\\
		
		\hline
		$\bm_t$ & $=\argmax_{m\in\Bset_t}b^m_t$ & the winning bidder in a round $t$ for a given play of the game (if exists)\\
		
		\hline
		$b^m_t$ & & the bid of a buyer $m$	in a round $t$ for a given play of the game\\
		
		\hline
		$p^m_t$ & & the reserve price set to a buyer $m$	in a round $t$ for a given play of the game\\
		
		\hline
		$a^m_t$ & $=\Ind_{b^m_t \ge p^m_t}$ &  indicator of bidding higher than the  reserve price by a buyer $m$	in a round $t$ for a given play of the game \\
		
		\hline
		$\ba^m_t$ & $=  \Ind_{\{\Bset_t \neq \eset \&  m = \bm_t\}}$& the allocation outcome of a round $t$ for a bidder $m$	 for a given play of the game \\
		
		\hline
		$\ba_t$ & $= \Ind_{\{\Bset_t \neq \eset\}}$ &  the allocation outcome of a round $t$ over all bidders	 for a given play of the game		\\
		
		\hline
		$\bp^m_t$ &  $= \ba^m_t\bp_t$ & the payment outcome of a round $t$ for a bidder $m$	 for a given play of the game		\\
		\hline
		$\bp_t$ & $=\max\{p^{\bm_t}_t, \max_{m\in\Bset^{-\bm_t}_t} b^m_t \}$ & the payment outcome of a round $t$  over all bidders	 for a given play of the game 	\\

		\hline
		$\mathbf{x}$ &  $=\{x^m\}_{m=1}^M$ & the vector of buyer values of some notion $x$ (e.g., valuations $\vb$, bids $\bb_t$, reserve prices $\pb_t$, payments $\bpb_t$, allocations $\bab_t$ and $\ab_t$   etc)	\\
		
		\hline
		$x_{t_1:t_2}$ &$=\{x_t\}_{t=t_1}^{t_2}$ &  the subseries  of some time series $\{x_t\}_{t=1}^T$ (e.g., bids $\bb_{1:T}$, reserve prices $\pb_{1:T}$, payments $\bpb_{1:T}$, allocations $\bab_{1:T}$ and $\ab_{1:T}$   etc)	\\
		
		\hline
		$\ASet_M$ & & the set of pricing algorithms of the seller	against $M$ buyers	\\
		\hline
		$\ASet^{\RPPA}$ & $\subset \ASet_1$  & the subclass of $1$-buyer pricing algorithms for repeating posted-price auctions\\
		\hline
		$\A$ & & a pricing algorithm (generally, from the set $\ASet_M$)		\\
		
		\hline
		$M$ & & the number of buyers in the repeated game			\\
		\hline
		$\Bset$ & $=\{1,\ldots,M\}$ &  the set of buyers (bidders)		\\
		\hline
		$\bBset$ & $=\{m\in\Bset \mid v^m = \bv\}$ &  the set of buyers whose valuation is the highest one $\bv$ 	\\
		
		\hline
		$\Bset^{-m}$ & $=\Bset \setminus \{m\}$ &  the set of buyers (bidders)	without the buyer $m$	\\
		
		\hline
		$\Bset_t$ & $=\{m\in\Bset \mid b^m_t \ge p^m_t\}$ &  the set of actual buyers in a round $t$ (they bid higher than reserve prices) 	\\

		\hline
		
	\end{tabular}
\end{table}

\begin{table}
	\centering
	\caption{General notations: part II.}
	\label{tbl_notations_general_p2}
	\begin{tabular}{|c|l|p{10cm}|}
		\hline
		\textbf{Notation} & \textbf{Expression} & \textbf{Description} \\
		\hline
		\hline

		$\Reg(\ldots)$ & & regret of a pricing algorithm		\\
		\hline
		$\SReg(\ldots)$ & & strategic regret of a pricing algorithm		\\
		\hline
		$\Sur(\ldots)$ & & expected surplus of a buyer (bidder) 		\\

		\hline
		$\g_m$ & & the discount rate of a buyer $m\in\Bset$		\\
		\hline
		$\gb$ & $=\{\g_m\}_{m=1}^M$ &  the vector of the discount rates of the buyers	\\

		\hline
		$h$ & & a buyer history		\\
		
		\hline
		$h^m_t$ &$=(b^m_{1:t-1}\!,p^m_{1:t}\!,\ba^m_{1:t-1}\!,\bp^m_{1:t-1})$&  the history available to a buyer $m$ in a round $t$ for a given play of the game	\\
		
		\hline
		$\sig$ & $\in\StrartSet_T$& a buyer strategy		\\

		\hline
		$\beta^m$ & $\in\StrartSet^{M-1}_T$& the beliefs of a buyer $m$  on the strategies of the other bidders  		\\
		\hline
		$\betab$ & $=\{\beta^m\}_{m=1}^M$ & the beliefs of all buyers
		\\
		
		\hline
		$\HH_t$ & & the set of all possible histories in a round $t$		\\
		
		\hline
		$\HH_{t_1:t_2}$ & $= \!\sqcup_{t=t_1}^{t_2}\HH_t$ & the disjoint union of the sets of histories in rounds $t_1,\ldots,t_2$		\\
		
		\hline
		$\StrartSet_T$ & & the set of all possible buyer strategies 		\\

		\hline
		$\osig^m$ & & an optimal strategy of a buyer $m$  in a  round	$t$	\\
		\hline
		$\ob^m_t $ & & the optimal bid of a buyer $m$ in a  round	$t$	for a given play of the game	\\
		\hline
		$\obb_t$ & $=\{ \ob^m_t\}_{m=1}^M$ & the optimal bids of all buyers  in a  round	$t$	for a given play of the game		\\
		\hline
		$\obb_{1:T}$ & & the optimal bids of all buyers  in all  rounds	for a given play of the game		\\

		\hline
		
	\end{tabular}
\end{table}

\begin{table}
	\centering
	\caption{General notations: part III (related to RPPA algorithms).}
	\label{tbl_notations_general_p3}
	\begin{tabular}{|c|l|p{10cm}|}
		\hline
		\textbf{Notation} & \textbf{Expression} & \textbf{Description} \\
		\hline
		\hline
		$\T(\A_1)$ & & the complete binary tree associated with a RPPA algorithm $\A_1$		\\
		\hline
		$\n$ or $\m$  & & a node in the complete binary tree $\T(\A_1)$ of a RPPA algorithm $\A_1$	  		\\
		\hline
		$\rt(\n)$ & & the right child of a node $\n$		\\
		\hline
		$\lf(\n)$ & & the left child of a node $\n$		\\
		\hline
		$\RR(\n)$ & & the right subtree of a node $\n$ (its root is $\rt(\n)$)		\\
		\hline
		$\LL(\n)$ & & the left  subtree of a node $\n$	(its root is $\lf(\n)$)		\\
		\hline
		$\treeroot(\T)$ & & the root of a tree $\T$		\\
		\hline
		$\p(\n)$ & & the price in a node $\n$ (that is offered to a buyer when an algorithm reaches this node)		\\
		
		\hline
		$\T_1 \cong \T_2$ & & the trees $\T_1 $ and $\T_2$	are price-equivalent	\\
		\hline
		$\delta_{\n}^{l}$ & $= \p(\n) - \inf_{\m\in\LL(\n)}\p(\m)$ & the left increment of a node $\n$ \\
		
		\hline
	\end{tabular}
\end{table}

\newpage

\subsection{Notations related to dividing algorithms}

See Table~\ref{tbl_notations_div_alg}.

\begin{table}
	\centering
	\caption{Notations related to dividing algorithms.}
	\label{tbl_notations_div_alg}
	\begin{tabular}{|c|l|p{10cm}|}
		\hline
		\textbf{Notation} & \textbf{Expression} & \textbf{Description} \\
		\hline
		\hline
		
		$i$ & & a period of a dividing algorithm ({\bf do not confuse with} (1) a round of the game and (2) a phase of PRRFES algorithm!)		\\
		\hline
		
		$t^m_i$ & & the round in a period $i$ in which the bidder $m$ is not eliminated by a barrage price (i.e., $m$ is non-eliminated participant) of  a dividing algorithm 		for a given play of the game	\\
		\hline
		
		$p^{m, \mathrm{bar}}$ or $p^{\mathrm{bar}}$ & & a barrage reserve price		\\
		\hline
		
		$\Sset_i$ & & the set of bidders suspected by 	a dividing algorithm in a period $i$ 	for a given play of the game	\\
		\hline
		
		$\TT_i$ &  
		& the rounds of a period $i$ 	for a given play of the game	\\
		\hline
		$\II^m$ & $= \{t^m_i\}_{i=1}^{I^m}$ & the rounds	in which the bidder $m$ is not eliminated by a barrage price (i.e., $m$ is non-eliminated participant) of dividing algorithm 	for a given play of the game		\\
		\hline
		
		$I^m$ &$= |\II^m|$ & the subhorizon of a buyer $m$ (the number of periods in which he is suspected, i.e., $m\in\Sset_i$)		for a given play of the game	\\
		\hline
		
		$\A^m$ & & the subalgorithm of  a dividing algorithm that acts against a buyer $m$		\\
		\hline
		
		$\Reg^m(\ldots)$ & & Regret of the subalgorithm  of a dividing algorithm that acts against a buyer $m$			\\

		\hline
		$\diva_M(\ldots)$ & & a $\diva$-transformation of $1$-buyer pricing algorithm to the case of $M$ buyers		\\
		
		\hline
		$\SReg^\ind(\ldots)$ & & individual strategic regret of a dividing algorithm		\\
		\hline
		$\SReg^\dev(\ldots)$& & deviation strategic regret of a dividing algorithm		\\
		
		\hline
		$\sr$ & & a stopping rule used in a	$\diva_M$-transformation of $1$-buyer pricing algorithm		\\
		
		\hline
		$\langle\A\rangle$ & & a transformation of a RPPA algorithm $\A$ s.t.\ all penalization sequences of nodes are replaced by reinforced  penalization ones	\\
		
		\hline
		$\n_i^m$ & & the tracking node of a buyer $m$ by $\diva_M$-transformed RPPA algorithm  in a period $i$ 	for a given play of the game \\

		\hline
	\end{tabular}
\end{table}


\subsection{Notations related to divPRRFES}
See Table~\ref{tbl_notations_divPRRFES}.

\begin{table}
	\centering
	\caption{Notations related to divPRRFES.}
	\label{tbl_notations_divPRRFES}
	\begin{tabular}{|c|l|p{12cm}|}
		\hline
		\textbf{Notation} & \textbf{Expression} & \textbf{Description} \\
		\hline
		\hline

		$r$ & & the number of penalization rounds (a parameter of PRRFES)		\\
		\hline
		
		$g(l)$ & & the exploitation rate (a parameter of PRRFES)		\\
		\hline
		
		$l$ & & a phase of PRRFES	\\
		\hline
		
		$\ve_l$ & $=2^{-2^l}$& the iteration parameter of a phase $l$	\\
		\hline
		
		$q^m_l$ & & the last accepted price by a buyer $m$ before a phase $l$ 	for a given play of the game	\\
		\hline
		
		$p^m_{l,k}$ & & the $k$-th exploration price of a buyer $m$ in a phase $l$ 	for a given play of the game	\\
		\hline
		
		$K^m_{l}$ & & the last accepted exploration price of a buyer $m$ in a phase $l$ 	for a given play of the game	\\
		\hline
		
		$l^m_i$ & & the current phase of a buyer $m$ in a period $i$ for a given play of the game	\\
		
		\hline
		$l(\n)$ & & the phase of a node $\n$ from the tree of the  algorithm  PRRFES		\\
		\hline
		$q(\n)$ & & the last accepted price before the current phase of a node $\n$ from the tree of the  algorithm  PRRFES		\\
		\hline
	\end{tabular}
\end{table}

\newpage

\section{Discussion \& extensions of the result}


{\bf Improvements of divPRRFES.}
For practical use, there are several places where divPRRFES can be improved. 
For instance, (a) the penalization parameter $r$ can be made adaptive to take into account the rounds in which a buyer is eliminated (i.e., reduce the number of penalizations by the number of rivals currently suspected by the seller); 
(b) or the stopping rule $\sr_{\A_1}$ can faster eliminate bidders, since the lower bound $u^m_i$ can be updated each time the buyer $m$ accepts an exploration price $p^m_{l,k}$.
Despite these improvements would require some additional pages in our proofs, they do not improve the asymptotic bound of $O(\log\log T)$.

{\bf Horizon independence.} The algorithm divPRRFES is horizon-independent since it is based on the horizon-independent PRRFES $\A_1$, which induces the subalgorithm $\langle\A_1\rangle$ and the stopping rule $\sr_{\A_1}$. Hence, the seller is not required to know in advance the number of rounds~$T$ of the game, when she applies divPRRFES.

\end{document}